
\documentclass[nohyperref]{article}

\usepackage{soul}
\usepackage{microtype}
\usepackage{graphicx}
\usepackage{subfigure}
\usepackage{booktabs} 
\usepackage{subfigure}
\usepackage{hyperref}
\usepackage{comment}


\usepackage[accepted]{icml2023}


\usepackage{amsmath}
\usepackage{amssymb}
\usepackage{mathtools}
\usepackage{amsthm}
\usepackage{physics}

\usepackage[capitalize,noabbrev]{cleveref}
\usepackage{mathtools}

\DeclarePairedDelimiterX\Set[2]{\{}{\}}{#1\,\delimsize\vert\,#2}

\theoremstyle{plain}
\newtheorem{theorem}{Theorem}[section]

\newtheorem{lemma}[theorem]{Lemma}
\newtheorem{corollary}[theorem]{Corollary}
\theoremstyle{definition}

\theoremstyle{remark}

\def\bE{\mathbf{E}}

\def\bI{\mathbf{I}}

\def\bN{\mathbf{N}}

\def\bP{\mathbf{P}}

\def\bR{\mathbf{R}}

\newcommand{\calO}{{\mathcal{O}}}

\newcommand{\calS}{{\mathcal{S}}}

\newcommand{\df}{\mathrm{d}}
\newcommand{\Geo}{{\mathsf{Geo}}}
\newcommand{\Se}{S^{\mathsf{E}}}
\newcommand{\So}{S^{\mathsf{O}}}
\newcommand{\xio}{\xi^{\mathsf{O}}}
\newcommand{\xie}{\xi^{\mathsf{E}}}
\newcommand{\READ}{\mathsf{READ}}
\renewcommand{\var}{\text{Var}}

\def\y#1{y^{(#1)}}
\def\yto#1{y^{(0:#1)}}
\def\yfromto#1#2{y^{(#1:#2)}}

\usepackage[textsize=tiny]{todonotes}

\icmltitlerunning{Optimal randomized multilevel Monte Carlo for repeatedly nested expectations}
\begin{document}

\twocolumn[
\icmltitle{Optimal randomized multilevel Monte Carlo for repeatedly nested expectations}




\begin{icmlauthorlist}
\icmlauthor{Yasa Syed}{Rutgers}
\icmlauthor{Guanyang Wang}{Rutgers}

\end{icmlauthorlist}

\icmlaffiliation{Rutgers}{Department of Statistics, Rutgers University, New Brunswick, United States}

\icmlcorrespondingauthor{Guanyang Wang}{guanyang.wang@rutgers.edu}

\icmlkeywords{nested expectation, optimal estimator, Multilevel Monte Carlo}

\vskip 0.3in
]


\printAffiliationsAndNotice{}  

\begin{abstract}
The estimation of repeatedly nested expectations is a challenging task that arises in many real-world systems. However, existing methods generally suffer from high computational costs when the number of nestings becomes large. Fix any non-negative integer $D$ for the total number of nestings. Standard Monte Carlo methods typically cost at least  $\calO(\varepsilon^{-(2+D)})$ and sometimes $\calO(\varepsilon^{-2(1+D)})$ to obtain an estimator up to  $\varepsilon$-error.  More advanced methods, such as multilevel Monte Carlo,  currently only exist for $D = 1$. In this paper, we propose a novel Monte Carlo estimator called $\READ$, which stands for ``Recursive Estimator for Arbitrary Depth.'' Our estimator has an optimal computational cost of $\calO(\varepsilon^{-2})$  for every fixed $D$ under suitable assumptions, and a nearly optimal computational cost of $\calO(\varepsilon^{-2(1 + \delta)})$  for any $0 < \delta < \frac12$ under much more general assumptions.  Our estimator is also unbiased, which makes it easy to parallelize. The key ingredients in our construction are an observation of the problem's recursive structure and the recursive use of the randomized multilevel Monte Carlo method.
\end{abstract}

\section{Introduction}
\label{sec:intro}

\medskip
Monte Carlo methods are a class of algorithms that use random sampling to estimate quantities of interest, such as integrals or expected values. When the estimand can be expressed as  an expectation, for example $\bE_\pi[g(X)]$, these methods work by generating independent random samples $X_1, \ldots, X_n$ from $\pi$, and using the average $\sum_{i=1}^ng(X_i)/n$ as an estimator. Monte Carlo estimators are unbiased and converge at a rate of $n^{-1/2}$, regardless of the dimension of the samples. This dimension-independent convergence rate makes Monte Carlo methods a powerful tool for approximating high-dimensional integrations, as they do not suffer from the curse of dimensionality that plagues deterministic numeric integration methods.

However, the above analysis implicitly assumes the integrand $g$ can be pointwisely evaluated, which may not be possible in many situations. This can arise, for instance, when it is expressed as another integration over latent variables  or when it involves solving a optimization problem. In this paper, we study the problem of estimating repeatedly nested expectations (RNE), which means the integrand  depends on a sequence of other functions and conditional expectations. Specifically, fix any positive integer $D$ for the total number of nestings, and $\{g_d\}_{d=0}^D$ for a family of real-valued functions which can be pointwisely evaluated. Let $(\y0, \ldots, \y{D})$ be a finite-time stochastic process with underlying joint distribution $\pi$, and let $\y{0:d}$ denote the vector $(\y{0}, \ldots, \y{d})$ for every $d\leq D$. The RNE, first formally formulated in \cite{rainforth2018nesting}, is defined as: 
\begin{align} \label{eqn:nested target}
    \gamma_0
=
\bE
\left[ 
g_0
\left(
\y{0},
\gamma_1
\left(
\y0
\right)
\right)
\right],
\end{align}

where $\{\gamma_i\}_{i=1}^{D-1}$ is recursively defined as:
\begin{align}\label{eqn: nesting}
    \gamma_d(\yto{d-1})
=
\bE
\left[
g_d
\left(
\yto{d}
,
\gamma_{d+1}
\left(
\yto{d}
\right)
\right)
\bigg|~
\yto{d-1}
\right],
\end{align}

and 
\begin{align}\label{eqn: nested final level}
\gamma_D(\yto{D-1})
=
\bE
\left[
g_D
\left(
\yto{D}
\right) 
\bigg|~ \yto{D-1}
\right].
\end{align}

The estimation of Resource-Optimal Nested Expectations (RNEs) is a significant challenge that encompasses various practical scenarios, where the desired outcome relies on multiple stages or decision points. Here, we provide several instances to illustrate this:
\begin{itemize}
    \item In financial modeling, one crucial problem involves estimating RNEs when $\gamma_0$ represents the expected utility of an optimal strategy in a $D$-horizon optimal stopping problem. Here, $g_d(\yto{d}, u)$ is defined as $\max{\y{d}, u}$ for $0\leq d \leq D-1$, and $g_D(\yto{D})$ is simply $\y{D}$.

\item When $D = 2$, a recent paper by \cite{giles2023efficient} focuses on risk estimation for the credit valuation adjustment. In their analysis, the outermost function $g_0$ is a Heaviside function, while the inner functions $g_1$ and $g_2$ are smooth functions.

\item When $D = 1$, RNE estimation finds extensive applications in Bayesian experimental design \cite{goda2022unbiased}, portfolio risk management \cite{gordy2010nested}, stochastic and bilevel optimization \cite{hu2021bias}, as well as variational Bayes \cite{he2022unbiased}.
\end{itemize}

In addition to the aforementioned examples, RNE estimation, sometimes also referred to nonlinear Monte Carlo, finds relevance in various fields including probabilistic programs \cite{rainforth18nesting_prob}, numerical partial differential equations (PDEs) \cite{beck2020nonlinear}, as well as physics and chemistry \cite{dauchet2018addressing}.

However, estimating RNEs is challenging. As shown in formulas \eqref{eqn:nested target} -- \eqref{eqn: nested final level}, we are interested in the expectation of $g_0$, which depends  on the random variable $\y0$ and $\gamma_1(\y0)$ -- a conditional expectation of $g_1$ given $\y0$. Then $g_1$  further depends on a random variable $\y1$ and $\gamma_2(\y0, \y1)$ which is a  conditional expectation of $g_2$ given $\y0$ and $\y1$.  This procedure is recursively defined until it reaches the deepest depth, $D$. Since $\gamma_1(\y0)$ (and also $\gamma_2, \gamma_3, \ldots$) cannot be directly evaluated in most practical cases, estimating RNEs cannot be handled by  standard Monte Carlo methods. 

The most natural way to estimate RNEs is by nesting  Monte Carlo (NMC) estimators. In the $D = 1$ case, this method works by first sampling independent and identically distributed (i.i.d.) copies $\y{0}_1, \ldots, \y{0}_{N_0}$ according to the distribution of $\y0$. For each fixed $\y0_i$, one further samples $N_1$ i.i.d. $\y1_1, \ldots, \y1_{N_1}$ according to $\pi(\y1\mid \y0_i)$, and uses the standard estimator
$ \hat{\gamma}_1(\y0_i) \coloneqq \sum_{i=j}^{N_1} g_1(\y0_i,\y1_j)/N_1$ to estimate $\gamma_1(\y0_i)$. The final estimator uses the estimated $\hat{\gamma}_1(\y0_i)$ to replace the intractable $\gamma_1(\y0_i)$, i.e.,
$$I_{N_0, N_1}= \frac{1}{N_0}\sum_{i=1}^{N_0} g_0(\y0_i, \hat{\gamma_1}(\y0_i)).$$ This nested estimator can be easily extended to the general $D$ case, albeit the notations become more complex. Roughly, one still samples $N_0$ i.i.d. copies according to $\pi(\y0)$, and for each fixed trajectory $\yto{d-1}$,  the user generates  $N_d$   i.i.d. samples from $\pi(\y{d}\mid \yto{d-1})$ all the way to depth $D$ and then form the nested estimator from the deepest depth to the shallower depths. The construction details are referred to Section 3.2 of \cite{rainforth2018nesting}. 

After suitably allocating the number of samples $(N_d)_{d=0}^D$ for each depth, the root-mean-square error (rMSE) of the NMC estimator converges to $0$ at a rate of $N^{-1/(2D+2)}$ or $N^{-1/(D+2)}$ \cite{rainforth2018nesting}, depending on the regularity conditions of the functions $\{g_d\}_{d=0}^D$, where $N = \prod_{d=0}^D N_d$ is the total number of samples used to form a nested estimator. This convergence rate diminishes exponentially with $D$, meaning that NMC estimators do not have the same dimension-free convergence rate as standard Monte Carlo estimators.   As a result, NMC methods require at least $\calO(\varepsilon^{-(2+D)})$ and sometimes $\calO(\varepsilon^{-2(1+D)})$ samples to get an estimator within $\varepsilon$-rMSE, while standard Monte Carlo estimators require only $O(\varepsilon^{-2})$ samples. Although there are a few cases mentioned in \cite{rainforth2018nesting} where the canonical $\calO(N^{-1/2})$ rate can be achieved, the problem of estimating RNEs with an optimal (or dimension-free) convergence rate remains largely open.

In the special case $D = 1$, more efficient methods have been proposed \cite{giles2018mlmc,giles2019decision,giles2019multilevel} based on the celebrated multilevel Monte Carlo (MLMC) methods \cite{heinrich2001multilevel, giles2008multilevel}. These estimators achieve up to $\varepsilon$-rMSE with cost $O(\varepsilon^{-2} \log(1/\varepsilon)^2)$ or $O(\varepsilon^{-2})$ under varying conditions, comparing favorably with the NMC estimator. However, existing methods cannot be directly generalized to solve the general $D$ case. Meanwhile, implementing these methods requires users to prespecify the precision level $\varepsilon$ and conduct preliminary experiments/calculations to carefully estimate/bound the parameters in the MLMC algorithm (see, e.g., Theorem 1 of \cite{giles2019decision}). Therefore, existing MLMC estimators seem to be harder to implement and less amendable to our original problem, which has a recursive structure.

In this work, we propose the $\READ$, a novel Monte Carlo estimator for the RNE estimation with an arbitrary number of nestings $D$. Our construction is interesting in the following three aspects. Firstly, under suitable regularity conditions similar to those in \cite{rainforth2018nesting}, the rMSE of our estimator has an optimal convergence rate $N^{-1/2}$ regardless of $D$. Equivalently, our method costs in expectation $\calO(\varepsilon^{-2})$ to get an estimator up to $\varepsilon$-rMSE. Under much more general assumptions, our method still achieves a nearly-optimal cost  of $\calO(\varepsilon^{-2(1 + \delta)})$  for any $0 < \delta < \frac12$ to get an estimator up to $\varepsilon$-mean-absolute-error (MAE).

It is worth mentioning that most of our effort is devoted to designing unbiased estimators of $\gamma_0$ in \eqref{eqn:nested target} with finite computational cost and finite variance (or finite (2-$\delta$)-th moment under more general assumptions). After developing such an unbiased estimator, we can simulate independent copies of these estimators and average them. The $N^{-1/2}$ convergence rate and $\mathcal{O}(\varepsilon^{-2})$  cost are then immediate corollaries of the bias-variance decomposition formula, see Corollary \ref{cor: computational cost, LBS case}.

Therefore, another appealing property of $\READ$, in contrast to existing methods, is that it admits no estimation bias. Unbiasedness implies these estimators can be implemented in parallel processors without requiring any communication between them. Designing unbiased estimators has recently attracted much interest in statistics, operations research, and machine learning communities for its potential for parallelization. Our methods add to the rich body of works of \cite{glynn2014exact,rhee2015unbiased,Blanchet2015UnbiasedMC,jacob2020unbiased,biswas2019estimating,wang2021maximal, wang2022unbiased,kahale2022unbiased}.  

Finally, our algorithm for constructing $\READ$ relies on the randomized multilevel Monte Carlo (rMLMC) method \cite{mcleish2011general,rhee2015unbiased,Blanchet2015UnbiasedMC}, but it is significantly different from its previous applications. Many of the current applications of randomized Multilevel Monte Carlo (rMLMC) methods \cite{rhee2015unbiased,vihola2018unbiased,goda2022unbiased} also have a deterministic version known as the original Multilevel Monte Carlo (MLMC) \cite{giles2008multilevel}, which offers similar or even better guarantees in terms of computational cost. As a result, it is natural to speculated that every problem solved by rMLMC has a corresponding deterministic version. However, our findings indicate that this assumption may not always be accurate. The rMLMC framework is well-suited to the recursive structure of RNEs, and can be used as a subroutine in our method. In contrast, the non-randomized MLMC cannot be easily applied to the general case of $D > 1$. This suggests that the rMLMC framework may be more widely applicable than previously thought.

The rest of this paper is organized as follows: in the remainder of this section, we discuss related works, set up our notation, and introduce our technical assumptions. In Section \ref{sec:our approach}, we introduce our algorithm and show that it attains the optimal and nearly-optimal computational cost under two different assumptions, respectively. In Section \ref{sec:numerical}, we demonstrate the empirical performance of our method on several toy examples. We conclude this paper with a short discussion in Section \ref{sec:discussion}. Proof and experiment details are deferred to the Appendix. An additional experiment is  also included in Appendix \ref{sec:extra experiment}.

\subsection{Related work}\label{subsec:related_work}
Our algorithm design strategy mainly follows the randomized multilevel Monte Carlo (rMLMC) framework \cite{mcleish2011general,rhee2015unbiased,Blanchet2015UnbiasedMC}.  Our algorithm is  inspired by the unbiased optimal stopping estimator \cite{zhou2022unbiased}, which develops  estimators for the optimal stopping problem by recursively calling the rMLMC algorithm. We extend the methodology in \cite{zhou2022unbiased} both in scope and depth. Our method works with a more general class of problems formulated by \cite{rainforth2018nesting}, which includes the optimal stopping problem as a special case, and provides more precise results under practical assumptions.

Throughout this paper, we will assume the functions $\{g_d\}_{d=0}^{D-1}$ are all continuous and the  process $\pi$ can be perfectly simulated.  When $D = 1$ and $g_0$ is discontinuous, progress has been made by \cite{broadie2011efficient} and \cite{giles2019multilevel,giles2022multilevel}. When the underlying distribution is itself challenging, users have to first use MCMC to approximately sample from $\pi$. The case of $D = 1$ and challenging $\pi$ is considered in \cite{wang2022unbiased}.   

\subsection{Notations}\label{subsec:notations}
Now we introduce our notations. Many of our notations follow those used in the original definition   \cite{rainforth2018nesting}, despite generalizing their setting to a multivariate underlying process. Throughout this paper, we preserve the letter $D$ for the total number of nestings. We denote by $\pi$ the underlying joint distribution of a finite-time, real-valued, $M$-dimensional stochastic process $(\y0, \ldots, \y{D})$, i.e. $\y{d} \in \bR^M$ for each $0 \leq d \leq D$.

For every $0\leq i \leq j \leq D$, we use the \ $\yfromto{i}{j}$ to denote the vector $(\y{i}, \ldots, \y{j})$. The conditional distribution of $\yfromto{d}{D}$ given the value of $\yto{d-1}$ is denoted by $\pi_{d:D}(\cdot \mid \yto{d-1})$. The marginal distribution of $\y{d}$ conditioning on $\yto{d-1}$ is denoted by $\pi_d(\cdot\mid \yto{d-1})$. We adopt the convention that $\yto{-1} = \varnothing$, and therefore $\pi_0$ stands for the (unconditioned) marginal distribution of $\y{0}$. Let $\Pi$ be any probability distribution on some probability space, and $Z$ be some random variable on the same space, then we use $\lVert Z\rVert_{\Pi,m}$ to denote the $L^m$--norm of $Z$ under $\Pi$, i.e., $\big(\bE_{\Pi}[\lvert Z\rvert^m]\big)^{1/m}$. The geometric distribution with parameter $r$ is denoted by $\Geo(r)$. We also define $p_r(n) \coloneqq \bP[\Geo(r) = n] = r (1-r)^n$ for every $n \in \{0,1,2,\ldots, \}$. For every $0\leq d \leq D-1$, the function $g_d$ introduced in  \eqref{eqn:nested target} -- \eqref{eqn: nesting} maps from $\bR^{(d+1)M+1}$ to $\bR$ since $g_d$ takes as its first $d+1$ arguments $M$-dimensional vectors, and it takes only a scalar as its final argument. The function $g_D$ in \eqref{eqn: nested final level} maps from $\bR^{(D+1)M}$ to $\bR$ since $g_D$ has all $D+1$ vectors in the $M$-dimensional process as its arguments. For random variables $X_1, \ldots, X_{n}$, we denote their summation by $S_n \coloneqq \sum_{i=1}^n X_i$. When $n$ is even, we denote by $\So_{n/2} \coloneqq \sum_{k=1}^{n/2} S_{2k-1}$ and $\Se_{n/2} \coloneqq \sum_{k=1}^{n/2}S_{2k}$ the summations of their odd and even terms, respectively. 

\subsection{Assumptions}\label{subsec:assumptions}
Throughout this paper, we assume that we can access a simulator $\cal S$. The simulator can take any trajectory $\yto{d-1}$ with $0 \leq d\leq D$ as input, and outputs $\y{d}$ which follows the distribution $\pi_d(\cdot\mid \yto{d-1})$. In particular, $\calS$ can take $\varnothing$ as input and simulates $\y{0}\sim \pi_0$. Calling $\calS$ recursively for $D+1$ times generates one complete sample path. This assumption enables us to sample from any marginal or conditional distribution perfectly. This assumption is also standard and is posed explicitly or implicitly in nearly all the existing works concerning the estimation of nested expectations, see \cite{giles2019decision, goda2022unbiased, zhou2022unbiased} for examples.

For $0 \leq d \leq D - 1$, fix $g_d : \bR^{(d+1)M+1} \to \bR$. We say $g_d$ satisfies the last-component bounded second derivative condition (LBS) if there exists a $K_d < \infty$ such that
\begin{align}\label{eqn: last-second-derivative}
\sup_{(\yto{d}, z)} 
\left|\partial^2_{(d+1)M + 1} g_d(\yto{d}, z)\right| < K_d.
\end{align}
We say $g_d$ satisfies the last-component bounded Lipschitz condition (LBL) if there exists an $L_d < \infty$ such that for all $x, z \in \bR$
\begin{align}\label{eqn: last-lipschitz}
\sup_{\yto{d}}
|g_d(\yto{d}, x) - g_d(\yto{d}, z) < L_d|x - z|.
\end{align}
These assumptions (and their variants) are also posed in related works such as \cite{rainforth18nesting_prob, Blanchet2015UnbiasedMC, giles2018mlmc}.

\section{Algorithm, estimator, and theoretical results}\label{sec:our approach}
Now we are ready to present our main results. As discussed in Section \ref{sec:intro}, we will be focusing on designing a Monte Carlo estimator which is unbiased, has a finite computational cost, and has finite variance or (2-$\delta$)-th moment under different assumptions. 

\subsection{Preliminary analysis}\label{subsec:preliminary}

One of the challenges in estimating the RNEs is the difficulty of estimating $\gamma_1(\y0)$. Users typically first estimate $\gamma_1(\y0)$ and then use these estimators to estimate $\gamma_0$. For the time being, we are temporarily adding the assumption that users can simulate unbiased estimators $\hat\gamma_1(\y0)$ of $\gamma_1(\y0)$ for every fixed $\y0$ with finite computational cost. This assumption will be removed in Section \ref{subsec:general case}. It easily holds when $D = 1$, as users can repeatedly simulate $\y1_i \sim \pi_1(\cdot \mid \y0)$ and it follows from the problem definition that each $g_1(\y0,\y1_i)$ is unbiased for $\gamma_1(\y0)$. In the general case of $D > 1$, this assumption is far from trivial, as $\gamma_1(\y0)$ is itself a nested expectation with a nesting depth of $D-1$. Nevertheless, as we will see in Section \ref{subsec:general case}, this assumption helps us to capture and reduce the intrinsic difficulty of the problem and, therefore, will guide us to design the general algorithm.    

With this extra assumption, constructing unbiased estimators of  \eqref{eqn:nested target} is  equivalent to constructing unbiased estimators of $g_0(\y0, \gamma_1(\y0))$. Even with access to unbiased estimators of $\gamma_1(\y0)$, the intuitive plug-in estimator $g_0\big(\y0, \hat\gamma_1(\y0)\big)$ is still biased, as in general $\bE[g_0\big(\y0, \hat\gamma_1(\y0)\big) \mid \y0] \neq g_0(\y0, \bE[\hat\gamma_1(\y0)\mid \y0])$. To eliminate this bias, we use the rMLMC method \cite{Blanchet2015UnbiasedMC}, which is briefly reviewed below.

The rMLMC method uses the Law of Large Numbers (LLN) and rewrites $g_0$ as the following telescoping summation. 
\begin{align*}
   &g_0(\y0, \gamma_1(\y0)) 
   = 
   \bE
   \left[
   g_0\left(\y0, \lim_{k\rightarrow \infty}\frac{S_k}{k}\right) \bigg|~ \y0
   \right]\\
    &= \sum_{n=1}^{\infty} 
    \bE
    \left[
    g_0\left(\y0,  \frac{S_{2^n}}{2^n}\right) \bigg|~\y0
    \right] \\
    &\quad\quad\quad\quad- 
    \bE
    \left[
    g_0\left(\y0,  \frac{S_{2^{n-1}}}{2^{n-1}}\right) \bigg|~\y0
    \right],
\end{align*}
where $S_k = \sum_{i=1}^k \hat\gamma_{1,i}(\y0)$ is the summation of i.i.d. copies of $\hat\gamma_{1}(\y0)$. To unbiasedly estimate the infinite sum, the rMLMC algorithm  first samples $\y0\sim \pi_0$, then  samples a random $N\sim \Geo(r)$, finally generates $2^N$ unbiased estimators  $\{\hat\gamma_{1,i}(\y0)\}_{i=1}^{2^N}$ of $\gamma_1(\y0)$ and estimates $\gamma_0$ by  $R_0\coloneqq \Delta_N/p_r(N)$, where $\Delta_n$ is defined as: 
\begin{align*}
    \Delta_n \coloneqq g_0\left(\y0,  \frac{S_{2^n}}{2^n}\right) - \frac{1}{2}\Bigg[&g_0\bigg(\y0,  \frac{\Se_{2^{n-1}}}{2^{n-1}}\bigg) \\ & + g_0\bigg(\y0,  \frac{\So_{2^{n-1}}}{2^{n-1}}\bigg)\Bigg]
\end{align*}
for $n\geq 1$ and $\Delta_0 \coloneqq g_0(\y0, \hat \gamma_{1,1}(\y0))$.

The next theorem justifies the theoretical properties of $R_0$:
\begin{theorem}\label{thm:nesting_D=1}
With all the notations as above, suppose $g_0: \bR^{M+1}\rightarrow \bR$ satisfies LBS condition defined in \eqref{eqn: last-second-derivative}, and $\lVert \hat\gamma_1(\y0) \rVert_{\pi, m} < \infty$ for some $m \geq 4$. Then for any $r \in (1/2, 3/4)$, the estimator  $R_0 \coloneqq \Delta_N/p_r(N)$ has expectation $\gamma_0$, finite variance, and finite expected computational cost. 
\end{theorem}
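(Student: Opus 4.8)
The plan is to obtain a sharp second-moment bound on the increment $\Delta_n$ first, and then read off all three conclusions from geometric series whose convergence is controlled by the two endpoints of $(1/2,3/4)$. For the key estimate, write $a \coloneqq \Se_{2^{n-1}}/2^{n-1}$ and $b \coloneqq \So_{2^{n-1}}/2^{n-1}$, so that $S_{2^n}/2^n = (a+b)/2$ is their midpoint and
\[
\Delta_n = g_0\!\left(\y0, \tfrac{a+b}{2}\right) - \tfrac12\big[g_0(\y0, a) + g_0(\y0, b)\big].
\]
Expanding $g_0(\y0,\cdot)$ to second order about the midpoint and invoking the LBS bound \eqref{eqn: last-second-derivative}, the first-order terms cancel because the deviations of $a$ and $b$ from $(a+b)/2$ are equal and opposite, leaving $|\Delta_n| \le \frac{K_0}{8}(a-b)^2$. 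Since $a - b = 2^{-(n-1)}\sum_{k=1}^{2^{n-1}} W_k$ with $W_k \coloneqq \hat\gamma_{1,2k}(\y0) - \hat\gamma_{1,2k-1}(\y0)$ conditionally i.i.d.\ and mean-zero given $\y0$, the conditional variance gives $\bE[(a-b)^2\mid\y0] = O(2^{-n})$, while the Marcinkiewicz--Zygmund inequality gives $\bE[(a-b)^4\mid\y0] = O(2^{-2n})\,\bE[W_1^4\mid\y0]$. Taking expectations over $\y0$ and using $\lVert\hat\gamma_1\rVert_{\pi,m}<\infty$ with $m\ge 4$ to bound $\bE[W_1^4]\le C\lVert\hat\gamma_1\rVert_{\pi,4}^4$, I obtain $\bE[|\Delta_n|] = O(2^{-n})$ and $\bE[\Delta_n^2] = O(2^{-2n})$; the $n=0$ term is handled separately from the at-most-quadratic growth of $g_0$ implied by LBS together with the finite fourth moment of $\hat\gamma_1$.

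For unbiasedness, exchangeability of the i.i.d.\ copies shows that $\Se_{2^{n-1}}/2^{n-1}$ and $\So_{2^{n-1}}/2^{n-1}$ each share the conditional law of $S_{2^{n-1}}/2^{n-1}$, so $\bE[\Delta_n\mid\y0] = a_n - a_{n-1}$ where $a_n \coloneqq \bE[g_0(\y0, S_{2^n}/2^n)\mid\y0]$, with $\bE[\Delta_0\mid\y0]=a_0$. The absolute summability $\sum_n \bE[|\Delta_n|\mid\y0] < \infty$ from the first step licenses the interchange of sum and expectation, giving
\[
\bE[R_0\mid\y0] = \sum_{n\ge0}\frac{\bE[\Delta_n\mid\y0]}{p_r(n)}\,p_r(n) = \sum_{n\ge0}(a_n-a_{n-1}) = \lim_{n\to\infty} a_n.
\]
By the strong law $S_{2^n}/2^n \to \gamma_1(\y0)$, and by continuity of $g_0$ together with dominated convergence (again via the fourth-moment bound), $\lim_n a_n = g_0(\y0,\gamma_1(\y0))$; integrating over $\y0\sim\pi_0$ yields $\bE[R_0]=\gamma_0$.

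The interval for $r$ emerges in the last two claims. Conditioning on $N$ and using its independence from the samples, $\bE[R_0^2] = \sum_{n\ge0}\bE[\Delta_n^2]/p_r(n)$; since $\bE[\Delta_n^2]=O(2^{-2n})$ and $p_r(n)=r(1-r)^n$, the $n$-th term is $O\big((4(1-r))^{-n}\big)$, a convergent geometric series exactly when $1-r>1/4$, i.e.\ $r<3/4$. For the cost, computing $\Delta_N$ requires generating $2^N$ unbiased estimators of $\gamma_1$, so the expected cost is proportional to $\bE[2^N]=r\sum_{n\ge0}(2(1-r))^n$, which is finite exactly when $2(1-r)<1$, i.e.\ $r>1/2$. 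Thus $r\in(1/2,3/4)$ makes both series converge simultaneously.

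I expect the delicate point to be the first step: establishing the $O(2^{-2n})$ decay of $\bE[\Delta_n^2]$. The antithetic midpoint cancellation of the linear Taylor term is precisely what upgrades the naive $O(2^{-n})$ decay (which would force $r$ near $1$ and destroy the cost bound) to the $O(2^{-2n})$ rate that opens the entire window $(1/2,3/4)$. Securing this requires both the second-derivative control from LBS and the full fourth-moment hypothesis $m\ge4$, and one must take care that every $\y0$-dependent constant is integrated correctly under the joint norm $\lVert\cdot\rVert_{\pi,m}$ rather than being merely finite pointwise in $\y0$.
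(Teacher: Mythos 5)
Your proposal is correct and follows essentially the same route as the paper, which proves this statement as the $D=1$ instance of Theorem~\ref{thm:general D, second order case}: a second-order Taylor expansion under LBS to cancel the linear term antithetically, the Marcinkiewicz--Zygmund inequality (the paper's Lemma~\ref{lem:conditional-MZ}) with the fourth-moment hypothesis to get $\bE[\Delta_n^2]=O(2^{-2n})$, a telescoping argument with a justified sum--expectation swap for unbiasedness, and the two geometric series $\sum_n (4(1-r))^{-n}$ and $\sum_n (2(1-r))^{n}$ pinning down $r\in(1/2,3/4)$. The only difference is cosmetic: you expand $g_0$ about the empirical midpoint $S_{2^n}/2^n$ to get $|\Delta_n|\le \tfrac{K_0}{8}\bigl(\Se_{2^{n-1}}/2^{n-1}-\So_{2^{n-1}}/2^{n-1}\bigr)^2$ in one stroke, whereas the paper expands about $\gamma_1(\y0)$ and bounds three squared deviations separately; both yield the same rate and constants of the same order.
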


Theorem \ref{thm:nesting_D=1} will  be proved  as  a special case of our  Theorem \ref{thm:general D, second order case}. For now, we use the following heuristic calculation to justify the unbiasedness of $\hat\gamma_0$: 
\begin{align*}
    &\bE
    [ R_0\big|~\y0 ] 
     = \sum_{n=0}^\infty\bE\left[\frac{\Delta_n}{p_r(n)} p_r(n)\big|~ \y0\right]\\  
    &= \sum_{n=0}^{\infty} \bE\left[g_0\left(\y0,  \frac{S_{2^n}}{2^n}\right)  - g_0\left(\y0,  \frac{S_{2^{n-1}}}{2^{n-1}}\right) \bigg|~\y0 \right]\\
    & = g_0(\y0, \gamma_1(\y0)).
\end{align*}
Therefore $\bE[R_0] = \bE[g_0(\y0, \gamma_1(\y0))] = \gamma_0$ by \eqref{eqn:nested target}. More technical discussions such as the range of $r$, other possible regularity conditions on $g_0$, and the moment guarantees of $\gamma_0$ will all be deferred after Theorem \ref{thm:general D, second order case}.

\subsection{Recursive rMLMC algorithm for general $D$}\label{subsec:general case}

Theorem \ref{thm:nesting_D=1} is useful to solve our original problem (without extra assumptions) in two ways. First, Theorem \ref{thm:nesting_D=1} already solves the case where $D=1$, as our extra assumption automatically holds. It states that if $g_0$ has a bounded second derivative on its last component, and $g_1(\y0,\y1)$ has at least finite fourth moment under $\pi$, then $R_0$ is unbiased, has finite variance, and finite expected computational cost. More importantly, Theorem \ref{thm:nesting_D=1} tells us that the original problem of estimating an RNE with a depth of $D$ can be solved if we can unbiasedly estimate $\gamma_1(\y0)$ for  fixed $\y0$, which is another RNE with a depth of $D-1$. Therefore, we have successfully reduced the number of nestings by one. This observation motivates us to come up with an algorithm for the general $D$ case, as explained below.

We first go one step further to illustrate the $D =2$ case.  When $D = 2$, estimating $\gamma_1(\y0)$  reduces to the case we have analyzed in Section \ref{subsec:preliminary}. To be precise, since $g_2(\yto 2)$ is unbiased for $\gamma_2(\yto 1)$ if $\y2\sim \pi_2(\cdot \mid \yto1)$, one can first sample $\y1\sim \pi_1(\cdot \mid \y0)$, then simulate $N\sim \Geo(r)$ and $2^N$ samples $\{\y2_i\}_{i=1}^{2^N}$ from $\pi_2(\cdot \mid \yto1)$. Let $\hat\gamma_{2,i}(\yto1) \coloneqq g_2(\yto1, \y2_i)$, our estimator of $\gamma_1(\y0)$ is then constructed in the same way as Section \ref{subsec:preliminary}, i.e., $R_1(\y0) := \Delta_N/p_r(N)$ with 
\begin{align*}
    \Delta_n \coloneqq g_1\left(\yto 1,  \frac{S_{2^n}}{2^n}\right) - \frac{1}{2}\Bigg[&g_1\bigg(\yto1,  \frac{\Se_{2^{n-1}}}{2^{n-1}}\bigg) \\ & + g_1\bigg(\yto1,  \frac{\So_{2^{n-1}}}{2^{n-1}}\bigg)\Bigg],
\end{align*}
where $S_{2^n}, \Se_{2^{n-1}},\So_{2^{n-1}}$ are the summation of every, even, and odd terms of $\{\hat{\gamma}_{2,i}(\yto1)\}$, respectively. The same procedure of simulating $R_1(\y0)$ can be repeated independently. Therefore we can  sample  another geometrically distributed random variable $N'\sim \Geo(r')$, and generate $R_{1,i}(\y0) := \Delta_{N'}/p_r(N')$ independently. Since each $R_{1,i}(\y0)$ is unbiased for $\gamma_1(\y0)$, one can again use the method described in Section \ref{subsec:preliminary} to form our final estimator for $\gamma_0$. After checking  $R_{1}(\y0)$ satisfies the finite fourth-moment assumption, Theorem \ref{thm:nesting_D=1} can be applied which implies our estimator is unbiased, has finite variance and finite cost (for the $D = 2$ case). 

The general case works in the same way. A key observation is that, due to the nested structure of the problem, Theorem \ref{thm:nesting_D=1} not only states that an unbiased estimator of $\gamma_0$ can be constructed if one can unbiasedly estimate $\gamma_1(\y0)$ for every $\y0$, but also directly implies that an unbiased estimator of $\gamma_{d}(\yto{d-1})$ can be constructed if one can unbiasedly estimate $\gamma_{d+1}(\yto{d})$ for every $\yto{d}$. Therefore, we can estimate $\gamma_0$ in a backward, inductive manner.

To begin, we consider the deepest depth of the problem, fixing any $\yto{D-1}$. An unbiased estimator of $\gamma_D(\yto{D-1})$ can be directly constructed as $g_D(\yto{D-1}, \y{D})$, where $\y{D} \sim \pi_{D}(\cdot \mid \yto{D-1})$. For $0 \leq d \leq D-1$, if we assume that users can generate unbiased estimators of $\gamma_{d+1}(\yto{d})$ for every $\yto{d}$, then we can obtain an unbiased estimator of $\gamma_{d}(\yto{d-1})$ by sampling one $\y{d}$, generating $N_{d} \sim \Geo(r_{d})$ and $2^{N_{d}}$ unbiased estimators of $\gamma_{d+1}(\yto{d})$, and applying the method described in Section \ref{subsec:preliminary}. This process continues  until we reach $d = 0$, at which point we have an unbiased estimator of $\gamma_0$. The parameters $(r_0, r_1, \ldots, r_{D-1})$ will be carefully chosen and depend on the regularity assumptions of $(g_0, g_1, \ldots, g_{D-1})$. These choices will be discussed in more detail later.

Our algorithm is described in Algorithm \ref{alg:recursive-rMLMC}. It is written as a recursive algorithm, though it could also be equivalently written in an iterative form with much more cumbersome notations. Algorithm \ref{alg:recursive-rMLMC} takes a depth index, a trajectory, a simulator, and parameters for the geometric distribution as inputs, and outputs an unbiased estimator of $\gamma_d(H)$. In particular, with  inputs \{depth $ = 0$, trajectory = $\varnothing$, parameters = $(r_0, r_1, \ldots, r_{D-1})$\}, it outputs $\READ$ -- an unbiased estimator of the RNE defined in \eqref{eqn:nested target}. The logic of Algorithm \ref{alg:recursive-rMLMC} is precisely the same as we just discussed. To estimate $\gamma_d(\yto{d-1})$, the algorithm first checks the value of $d$. When $d = D$, the problem becomes straightforward. When $d < D$, the algorithm samples $\y{d}$, appends $\y{d}$ to the trajectory, samples $N_d$, and calls itself $2^{N_d}$ times with depth $d + 1$ and new trajectory $\{\yto{d}\}$ to get $2^{N_d}$ unbiased estimators of $\gamma_{d+1}(\yto{d})$. Finally, we split these $2^{N_d}$ estimators into even and odd terms and apply the method described  in Section \ref{subsec:preliminary}. The algorithm is guaranteed to stop, as the depth will eventually reach the deepest depth $D$. 

\begin{algorithm}[tb]
   \caption{ A recursive rMLMC algorithm for  RNEs}
   \label{alg:recursive-rMLMC}
\begin{algorithmic}
   \STATE {\bfseries Input:} Depth index $d \in \{0,..., D\}$. Trajectory history $H = \{y^0,... ,y^{d-1}\}$ or $\varnothing$. A simulator $\mathcal{S}$. Parameters $r_{d}, ..., r_{D - 1}$ determined by conditions on $\{g_d\}_{d=0}^{D-1}$ (see Theorem \ref{thm:general D, second order case}, \ref{thm:general D, Lipschitz case}).
   \STATE {\bfseries Output}: An unbiased estimator of $\gamma_d(H)$
    \IF{$d = D$}
      \STATE Sample one $y^{(D)} \sim \pi_D\left( \cdot \mid H \right)$;
      \STATE {\bfseries Return:} $R_D \coloneqq g_D\left(\yto{D} \right).$ 
    \ELSE
    \STATE Sample $y^{(d)} \sim \pi_d\left( \cdot \mid H \right)$;
\STATE Update the trajectory $H \leftarrow H \cup \left\{y^{(d)}\right\}$;
\STATE Sample $N_{d} \sim \Geo(r_{d})$;
\STATE Call Algorithm \ref{alg:recursive-rMLMC} for $2^{N_{d}}$ times with inputs $\{d + 1; H; \mathcal{S}; r_{d+1},...,r_{D-1}\}$, and label the observations as $R_{d+1}(\yto{d})(1),...,R_{d+1}(\yto{d})\left( 2^{N_{d}} \right)$;
\STATE Calculate $S_{2^{N_d}}, \Se_{2^{N_d - 1}}, \So_{2^{N_d - 1}}$ defined in Section \ref{subsec:notations};
    \STATE Calculate $\left(\text{note } \Delta_0 \coloneqq g_{d}\left( \yto{d}, R_{d+1}(\yto{d})(1) \right) \right)$:
\begin{align*}
    \Delta_{N_{d}} &
=
g_{d}\left( 
\yto{d}, \frac{S_{2^{N_{d}}}}{2^{N_{d}}}
\right)
- \\
&
\frac12
\left[ 
g_{d}\left( 
\yto{d}, \frac{\So_{2^{{N_{d}} - 1}}}{2^{{N_{d}} - 1}}
\right)
+
g_{d}\left( 
\yto{d}, \frac{\Se_{2^{{N_{d}} - 1}}}{2^{{N_{d}} - 1}}
\right)
\right];
\end{align*}
\STATE {\bfseries Return:} 
$R_d \coloneqq \Delta_{N_{d}}/p_{r_{d}}(N_{d})$.
    \ENDIF
\end{algorithmic}
\end{algorithm}

\subsection{Theoretical guarantees}\label{subsec:thoery}
We now discuss the computational costs of Algorithm \ref{alg:recursive-rMLMC} and the statistical properties of $\READ$. Our theoretical results depend on the smoothness conditions of $\{g_d\}_{d=0}^{D-1}$, so we will examine the LBS and LBL cases separately.

\subsubsection{The LBS case} \label{subsubsec: LBS}
The following theorem shows, under the LBS assumption, the computational cost and the variance of $\READ$ can be controlled simultaneously.

\begin{theorem}\label{thm:general D, second order case}
    Suppose for every $d \in \{0,1,\ldots, D-1\}$, the function $g_d$ satisfies the LBS assumption  \eqref{eqn: last-second-derivative}, $r_d \coloneqq 1 - 2^{-k_d}$ satisfies $k_d \in \left( 1, \frac{2^{d+1}}{2^{d+1} - 1} \right)$, and $\lVert g_D(\yto{D})\rVert_{\pi, 2^{D+1}} < \infty$. Then for every $0\leq d \leq D$,   the output $R_d( \yto{d-1})$ of Algorithm \ref{alg:recursive-rMLMC} with inputs \{depth = $d$, trajectory = $\yto{d-1}$, $\calS$, parameters $(r_d, \ldots, r_{D-1})$\} satisfies: 
\begin{itemize}
    \item For $\pi$-almost  every fixed $\yto{d-1}$, 
    $$\bE\left[ R_{d}( \yto{d-1})\mid \yto{d-1}\right] = \gamma_{d}(\yto{d-1}).$$
    \item The expected computational cost of $R_d$ equals
    \begin{align*}
       \prod_{k= d}^{D-1} \frac{r_k}{2r_k - 1} < \infty.
    \end{align*}
    \item The output has finite $2^{d+1}$-th moment, i.e., 
    $$\bE_\pi\left[| R_{d}( \yto{d-1}) |^{2^{d+1}}\right] 
    < \infty~~ \text{for}~~ 0 \leq d \leq D.$$
\end{itemize}
\end{theorem}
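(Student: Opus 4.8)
The plan is to prove all three claims simultaneously by \emph{backward induction on} $d$, from $d = D$ down to $d = 0$, since the recursive structure of Algorithm \ref{alg:recursive-rMLMC} means that the output $R_d$ is built from $2^{N_d}$ i.i.d.\ copies of the output $R_{d+1}$. The base case $d = D$ is immediate: $R_D = g_D(\yto{D})$ with $\y{D} \sim \pi_D(\cdot \mid \yto{D-1})$, so $\bE[R_D \mid \yto{D-1}] = \gamma_D(\yto{D-1})$ by \eqref{eqn: nested final level}, the expected cost is $1$ (one simulator call), and the $2^{D+1}$-th moment is finite by the hypothesis $\lVert g_D(\yto{D}) \rVert_{\pi, 2^{D+1}} < \infty$.

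For the inductive step, suppose the three properties hold at level $d+1$. I would first establish \textbf{unbiasedness} by conditioning on $\yto{d}$ and mimicking the heuristic telescoping calculation given after Theorem \ref{thm:nesting_D=1}: conditionally on $\yto{d}$ and on $N_d = n$, the $2^n$ calls produce i.i.d.\ unbiased estimators of $\gamma_{d+1}(\yto{d})$ (by the inductive hypothesis), so $\bE[\Delta_n/p_{r_d}(n) \cdot p_{r_d}(n)]$ telescopes to $g_d(\yto{d}, S_{2^n}/2^n) - g_d(\yto{d}, S_{2^{n-1}}/2^{n-1})$ in expectation, and summing over $n$ and invoking the LLN (together with continuity of $g_d$) collapses the series to $g_d(\yto{d}, \gamma_{d+1}(\yto{d}))$; taking a further expectation over $\y{d} \sim \pi_d(\cdot \mid \yto{d-1})$ gives $\gamma_d(\yto{d-1})$ via \eqref{eqn: nesting}. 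The rigorous justification of interchanging the infinite sum with the expectation requires a dominated-convergence argument, which is where the moment and $k_d$ conditions enter. The \textbf{expected cost} is the easiest piece: if $C_d$ denotes the expected cost of one call at level $d$, then conditioning on $N_d$ gives $C_d = \bE[2^{N_d}] \cdot C_{d+1}$, and $\bE[2^{N_d}] = \sum_n 2^n r_d(1-r_d)^n = r_d/(2r_d - 1)$ converges precisely because $r_d > 1/2$ (i.e.\ $k_d > 1$); unrolling the recursion yields the stated product $\prod_{k=d}^{D-1} r_k/(2r_k-1)$.

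The \textbf{moment bound} is the main obstacle and the technical heart of the argument. I need to show $\bE_\pi[|R_d|^{2^{d+1}}] < \infty$ given that $\bE[|R_{d+1}|^{2^{d+2}}] < \infty$; note the target exponent \emph{halves} as $d$ increases, so at each step I have twice as many moments available on the inputs as I need on the output. The key estimate is a bound on the $2^{d+1}$-th moment of the increment $\Delta_n$. Using the LBS condition, a second-order Taylor expansion of $g_d$ in its last argument controls $\Delta_n$ by the second-order difference of $g_d$ evaluated at the sample averages, which by the bounded-second-derivative hypothesis is dominated by $\tfrac{K_d}{8}\cdot |S_{2^{n-1}}^{\mathsf{E}}/2^{n-1} - S_{2^{n-1}}^{\mathsf{O}}/2^{n-1}|^2$, the squared difference of the even- and odd-indexed empirical means. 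I would then apply a Marcinkiewicz--Zygmund / Burkholder-type inequality to bound the $2^d$-th moment of this squared centered average of i.i.d.\ mean-$\gamma_{d+1}$ terms, obtaining a decay of order roughly $2^{-n}$ raised to an appropriate power in the $p$-th moment of $\Delta_n$ (for $p = 2^{d+1}$). Combining this decay with the geometric tail $p_{r_d}(n)^{1-p}$ coming from the division by $p_{r_d}(n)$, the series $\sum_n \bE[|\Delta_n|^p] \, p_{r_d}(n)^{1-p}$ converges exactly when $k_d < 2^{d+1}/(2^{d+1}-1)$ — this is where the \emph{upper} bound on $k_d$ is used, balancing the variance-reduction decay of $\Delta_n$ against the inflation from the importance weight $1/p_{r_d}(N_d)$. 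I would verify the moment budget carefully: the Marcinkiewicz--Zygmund step converts the $2^{d+1}$-th moment of $R_d$ into a requirement of the $2^{d+2}$-th moment of $R_{d+1}$, which is supplied by the inductive hypothesis and is ultimately underwritten by the root assumption $\lVert g_D \rVert_{\pi, 2^{D+1}} < \infty$. Finally, because $R_d$ is itself an importance-weighted increment, its finiteness as a conditional $2^{d+1}$-th moment must be integrated against $\pi$ over $\yto{d-1}$ using the LBS constants $K_d$, which are uniform in the trajectory, so no further integrability in the earlier coordinates is lost.
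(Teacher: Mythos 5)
Your proposal is correct and follows essentially the same route as the paper's proof: backward induction from $d=D$, the cost recursion $C_d = \bE[2^{N_d}]\,C_{d+1}$, a second-order Taylor expansion under LBS that cancels the first-order terms in $\Delta_n$, the Marcinkiewicz--Zygmund inequality (via the paper's Lemma \ref{lem:conditional-MZ}) with the moment exponent doubling at each level, geometric-series convergence under $k_d < 2^{d+1}/(2^{d+1}-1)$, and a Fubini-type justification of the sum--expectation swap for unbiasedness. The only cosmetic difference is that you bound the antithetic difference by $\tfrac{K_d}{8}\bigl|\So_{2^{n-1}}/2^{n-1} - \Se_{2^{n-1}}/2^{n-1}\bigr|^2$ whereas the paper uses three squared deviations from $\gamma_{d+1}$; these are equivalent up to constants and yield the same rate.
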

Theorem \ref{thm:general D, second order case} states for $\pi$-almost  every $\yto{d-1}$, the expectation of the output $R_d$ conditioning on the input is unbiased for $\gamma_d(\yto{d-1})$. The computational cost has a finite expectation, and the output has a finite $2^{d+1}$-th moment \footnote{Readers should notice that the expectation of $R_d(\yto{d-1})$ is calculated under the conditional distribution $\pi_{d:D}(\cdot \mid \yto{d-1})$. The computational cost and the $2^{d+1}$-th moment are calculated under the joint distribution $\pi$. When the input depth $= 0$, these two underlying distributions coincide.}. The detailed proof of Theorem \ref{thm:general D, second order case} will be provided in the Appendix. Here, we highlight two special cases. First, Theorem \ref{thm:general D, second order case} shows that $\READ$, the output $R_0$ of Algorithm \ref{alg:recursive-rMLMC} when given input \{depth = $0$, trajectory = $\varnothing$, $\calS$, parameters = $(r_0, \ldots, r_{D-1})$\}, has the desired properties. Specifically, it is an unbiased estimator for $\gamma_0$ with finite expected computational cost and finite variance. Second, Theorem \ref{thm:general D, second order case} includes Theorem \ref{thm:nesting_D=1} as a special case when $D = 1$.

Let $R_{0,1}, R_{0,2}, \ldots $ be the i.i.d. outcomes by repeatedly implementing Algorithm \ref{alg:recursive-rMLMC}. Since each $R_{0,i}$ is unbiased and has a finite variance, the standard Central Limit Theorem (CLT) implies that $\sqrt{n} (\sum_{i=1}^n R_{0,i}/n - \gamma_0) \rightarrow \bN(0,1)$ in distribution. This means that the estimator $\sum_{i=1}^n R_{0,i}/n$ converges to $\gamma_0$ at a rate of $n^{-1/2}$ in rMSE (or equivalently, $n^{-1}$ in MSE), which compares quite favorably with the rates obtained by NMC estimators in \cite{rainforth18nesting_prob}. This rate is optimal in the sense that it matches the minimax lower bound over all Monte Carlo methods (Theorem 2.1 of \cite{heinrich1999monte}).  The next corollary shows that, by repeatedly implementing Algorithm \ref{alg:recursive-rMLMC}, users can easily obtain an unbiased estimator for $\gamma_0$ with at most $\varepsilon$-rMSE within $O(\varepsilon^{-2})$ computational cost. 
\begin{corollary}\label{cor: computational cost, LBS case}
With all the assumptions the same as Theorem \ref{thm:general D, second order case}, for any $\varepsilon > 0$, we can construct an estimator $R$ with expected computational cost $\calO(\varepsilon^{-2})$ such that the rMSE $\sqrt{\bE[(R - \gamma_0)^2]}$ is at most $\epsilon$.
\end{corollary}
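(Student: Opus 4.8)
The plan is to deduce this corollary directly from the three conclusions of Theorem \ref{thm:general D, second order case} specialized to depth $d = 0$, which already supply every ingredient we need: unbiasedness, finite variance, and finite expected cost. The only genuinely new element is a standard sample-averaging argument combined with the bias-variance decomposition, so almost all the work has been done upstream.

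First I would record the consequences of Theorem \ref{thm:general D, second order case} at $d = 0$. Unbiasedness gives $\bE[R_0] = \gamma_0$. The finite $2^{d+1}$-th moment with $d = 0$ is precisely a finite second moment, so $\sigma^2 \coloneqq \var(R_0) < \infty$. Moreover, the expected computational cost of a single call to Algorithm \ref{alg:recursive-rMLMC} at depth $0$ is the finite constant $C \coloneqq \prod_{k=0}^{D-1} \frac{r_k}{2r_k - 1}$, which depends only on $D$ and the fixed parameters $(r_0, \ldots, r_{D-1})$, and in particular not on $\varepsilon$.

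Next I would construct the averaged estimator. Generate $n$ i.i.d. copies $R_{0,1}, \ldots, R_{0,n}$ by running Algorithm \ref{alg:recursive-rMLMC} independently, and set $R \coloneqq \frac{1}{n}\sum_{i=1}^n R_{0,i}$. Since each copy is unbiased, so is $R$, the bias term vanishes, and the bias-variance decomposition yields
\begin{align*}
\bE[(R - \gamma_0)^2] = \var(R) = \frac{\sigma^2}{n}.
\end{align*}
Choosing $n = \lceil \sigma^2 / \varepsilon^2 \rceil$ then forces the rMSE $\sqrt{\bE[(R-\gamma_0)^2]}$ to be at most $\varepsilon$. The total cost is the sum of the $n$ i.i.d. per-copy costs, so its expectation is $n \cdot C = \lceil \sigma^2/\varepsilon^2 \rceil \cdot C = \calO(\varepsilon^{-2})$, with the implied constant $\sigma^2 C$ independent of $\varepsilon$.

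There is essentially no obstacle here; the substance of the result is carried entirely by Theorem \ref{thm:general D, second order case}, and this corollary is merely the routine translation of ``unbiased, finite variance, finite expected cost'' into ``optimal $\calO(\varepsilon^{-2})$ cost for $\varepsilon$-rMSE.'' The only mild subtlety worth flagging is that the prescribed sample size $n$ depends on the unknown constant $\sigma^2$; since the claim is only an order-of-growth statement, it suffices that $\sigma^2$ be finite, and in practice one would use a pilot run (or the running empirical variance) to calibrate $n$ without affecting the $\calO(\varepsilon^{-2})$ scaling.
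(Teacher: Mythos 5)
Your proposal is correct and follows essentially the same route as the paper's proof: average $n$ i.i.d. outputs of Algorithm \ref{alg:recursive-rMLMC}, apply the bias--variance decomposition using unbiasedness and finite variance from Theorem \ref{thm:general D, second order case}, take $n \asymp \var(R_0)/\varepsilon^2$, and multiply by the finite per-call expected cost. The only (harmless) additions are the ceiling on $n$ and the remark about estimating $\sigma^2$ in practice.
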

\begin{proof}[Proof of Corollary \ref{cor: computational cost, LBS case}]
Calling Algorithm \ref{alg:recursive-rMLMC} independently for $n$ times with \{depth = $0$, trajectory = $\varnothing$, $\calS$, parameters = $(r_0, \ldots, r_{D-1})$\} yield i.i.d. unbiased estimators $R_{0,1},...,R_{0,n}$ for $\gamma_0$. Let our estimator be $R \coloneqq \frac1n \sum_{i=1}^n R_{0,i}$. Then,
\[
\bE[(R - \gamma_0)^2]
=
\bE\left[
\left(
\frac1n \sum_{i = 1}^n R_{0,i} - \gamma_0 
\right)^2
\right] 
=
\frac1n \var(R_0).
\]
Thus noting that $\var(R_0) < \infty$ by Theorem \ref{thm:general D, second order case},  taking $n = \var(R_0)/\varepsilon^2$ samples ensures $R$ has up to $\varepsilon$-rMSE. Finally, let $C\coloneqq C(D) < \infty$ be the expected computational cost for one call of Algorithm \ref{alg:recursive-rMLMC} . The expected computational cost for constructing $R$ is then $C \cdot \var(R_0)/\varepsilon^2 = \Theta(\varepsilon^{-2})$. 
\end{proof}
We add  two additional remarks regarding the above corollary. Firstly, while the above result demonstrates that our algorithm achieves optimal dependency on $\epsilon$, it is important to highlight that we are operating within the context of the 'fixed $D$' regime, where the constant in our $\calO$ notation depends on $D$. In fact, it is clear from Theorem \ref{thm:general D, second order case} that each invocation of Algorithm \ref{alg:recursive-rMLMC} has a cost of at least $\Omega((1 + \omega)^D)$ for some $\omega > 0$, indicating that our algorithm does not scale well with increasing nesting levels. Nevertheless,  our algorithm remains practically relevant in scenarios where $D$ is small or moderate, including the examples discussed in Section \ref{sec:intro}. Secondly, the $\varepsilon$-rMSE of $R$ can be easily translated to other performance metrics via standard inequalities. For example, for any $\delta$, Markov's inequality implies the absolute error $\lvert R - \gamma_0\rvert$ is less than $\varepsilon/\sqrt{\delta}$ with probability at least $1-\delta$.

Next, we discuss the assumptions and proof strategies of Theorem \ref{thm:general D, second order case}. We require the first $D$ functions $\{g_d\}_{d=0}^{D-1}$ all satisfy the LBS condition, and the final function $g_D$ has finite $2^{D+1}$-th moment under $\pi$. The LBS assumption also appears in the work of NMC estimators (see the second part of Theorem 3 in \cite{rainforth2018nesting}). The moment assumption of $g_D$ is not required in  \cite{rainforth2018nesting}. Nevertheless, it is a mild assumption that holds in most practical applications. It covers all the cases where $g_D$ is bounded or has a  moment generating function (including the uniform, Gaussian, Poisson, or exponential distributions), which implies $\bE[\lvert g_D\rvert^k] < \infty$ for every $k$. As we will see in our proofs,  these assumptions help us to establish the moment guarantee  of Theorem \ref{thm:general D, second order case} in a backward inductive way. For example, the $2^{D+1}$-th moment assumption on $g_D$ and the LBS assumption on $g_{D-1}$ implies $R_{D-1}$ has finite $2^{D}$-th moment. More generally, the finiteness of the $2^{d+1}$-th moment of $R_d$ follows from the LBS assumption on $g_{d}$ and the $2^{d+2}$-th moment of $R_{d+1}$ (which is the conclusion of the previous inductive step). Eventually, we conclude $R_0$ has a finite variance. Finally, we want to emphasize our moment assumption on $g_D$ is not `trajectory-dependent'. We require $g_D$ has finite $2^{D+1}$-th moment under the joint distribution $\pi$ of $\yto{D}$, which is much weaker than $g_D$ has a uniformly bounded finite  $2^{D+1}$-th moment under $\pi_{D}(\cdot \mid \yto{D-1})$ for every fixed trajectory $\yto{D-1}$.  

Finally, the parameters $\{r_d\}_{d=0}^{D=1}$ reflect the trade-off between the variance and computation cost. Since $2^{N_d}$ calls are required for each $d$,  standard calculation shows that $\bE[2^{N_d}] =  r_d/(2r_d - 1)$ when $r_d > 0.5$, and $+\infty$ if $r_d \leq 0.5$.  Therefore, every $r_d$ has to be strictly greater than $0.5$ to ensure a finite expected computational cost. Meanwhile, we cannot guarantee finite variance or unbiasedness of $\READ$ when $r_d$ becomes too large. Our range for $r_d$ in Theorem \ref{thm:general D, second order case} follows from a careful calculation in our proof to ensure unbiasedness, finite computational cost, and variance simultaneously. 

\subsubsection{The LBL case}\label{subsubsec: LBL}
The  assumptions in Theorem \ref{thm:general D, second order case} guarantee that $\READ$ enjoys an optimal convergence rate and computational cost. However, the second-order derivative assumption also rules out many functions of practical interest, such as $\max$ and $\min$. In this section, we study the theoretical properties of Algorithm \ref{alg:recursive-rMLMC} and $\READ$ under weaker smoothness and moment assumptions. Our result is summarized below:

\begin{theorem}\label{thm:general D, Lipschitz case}
   Fix any $0 <\delta < 1/2$. Suppose for every $d \in \{0,1,\ldots, D-1\}$, the function $g_d$ satisfies the LBL assumption defined in \eqref{eqn: last-lipschitz}, and $r_d \coloneqq 1 - 2^{-k_d}$ satisfies 
   \[
k_d\in\left(1 ,
\left( \frac{2^{d+2} - 3\delta}{2^{d+3} - 3\delta}\right) 
\left( \frac{2^{d+1} - \delta}{2^d - \delta}\right)
\right).
\]
   Moreover, suppose $\lVert g_D(\yto{D})\rVert_{\pi, 2} < \infty$. Then for every $0\leq d \leq D$, the output $R_d( \yto{d-1})$ of Algorithm \ref{alg:recursive-rMLMC} with inputs \{depth = $d$, trajectory = $\yto{d-1}$, $\calS$, parameters $(r_d, \ldots, r_{D-1})$\} has the following properties: 
\begin{itemize}
    \item For $\pi$-almost  every fixed $\yto{d-1}$, 
    $$\bE\left[ R_{d}( \yto{d-1})\mid \yto{d-1}\right] = \gamma_{d}(\yto{d-1}).$$
    \item The expected computational cost of $R_d$ equals
    \begin{align*}
       \prod_{k= d}^{D-1} \frac{r_k}{2r_k - 1} < \infty.
    \end{align*}
    \item The output has finite $(2 - \delta/2^d)$-th moment, i.e., 
    $$\bE_\pi\left[| R_{d}( \yto{d-1}) |^{(2 - \delta/2^d)}\right] 
    < \infty~~ \text{for}~~ 0 \leq d \leq D.$$
\end{itemize}
\end{theorem}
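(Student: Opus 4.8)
\textbf{Proof proposal for Theorem \ref{thm:general D, Lipschitz case}.}

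The plan is to mirror the backward induction used in the LBS case (Theorem \ref{thm:general D, second order case}), but to track the weaker $(2 - \delta/2^d)$-th moment instead of the $2^{d+1}$-th moment. I would fix $\delta \in (0, 1/2)$ once and for all, and prove the three claims (unbiasedness, expected cost, moment bound) simultaneously by downward induction on $d$, starting from the base case $d = D$ and descending to $d = 0$. The unbiasedness and cost claims should be essentially inherited from the LBS argument, since neither depends on the smoothness class: unbiasedness follows from the rMLMC telescoping identity (as in the heuristic calculation after Theorem \ref{thm:nesting_D=1}), valid once the relevant sum converges absolutely, and the expected cost $\prod_{k=d}^{D-1} r_k/(2r_k - 1)$ is a purely combinatorial consequence of $\bE[2^{N_k}] = r_k/(2r_k - 1)$ together with the recursive branching structure of Algorithm \ref{alg:recursive-rMLMC}. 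So the real content is the moment bound, and I would organize the proof around it.

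\emph{Base case.} For $d = D$, the estimator is $R_D = g_D(\yto{D})$ with $\y{D} \sim \pi_D(\cdot \mid \yto{D-1})$, so unbiasedness is immediate from \eqref{eqn: nested final level}, the cost is the empty product $1$, and the target moment $(2 - \delta/2^D)$-th is finite because it is dominated by the assumed $L^2(\pi)$ bound $\lVert g_D \rVert_{\pi, 2} < \infty$ (as $2 - \delta/2^D < 2$, Jensen/monotonicity of $L^p$ norms applies).

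\emph{Inductive step.} Assume $R_{d+1}(\yto{d})$ is unbiased for $\gamma_{d+1}(\yto{d})$ and has finite $(2 - \delta/2^{d+1})$-th moment under $\pi$. To bound the moment of $R_d = \Delta_{N_d}/p_{r_d}(N_d)$, I would first control $\lVert \Delta_n \rVert$ in the appropriate $L^p$ norm. Here the LBL condition \eqref{eqn: last-lipschitz} replaces the second-derivative bound: instead of a second-order Taylor estimate giving $\Delta_n = \calO(\text{second difference})$, Lipschitzness only gives that $\Delta_n$ is controlled by a \emph{first-order} quantity, namely the deviation of the sample averages $S_{2^n}/2^n$, $\Se_{2^{n-1}}/2^{n-1}$, $\So_{2^{n-1}}/2^{n-1}$ from the common mean $\gamma_{d+1}(\yto{d})$. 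The key estimate is that $\lvert \Delta_n \rvert \leq \tfrac{L_d}{2}\big( \lvert S_{2^n}/2^n - \text{avg of the two halves}\rvert \big)$ is bounded by a difference of empirical means of $2^{n-1}$ i.i.d. terms, which by a Marcinkiewicz--Zygmund / Rosenthal-type moment inequality for sums of centered i.i.d. variables decays like $2^{-n(1 - 1/p)}$ (for $1 < p \le 2$) in $L^p$, where $p = 2 - \delta/2^{d+1}$ is the moment available at level $d+1$. This is precisely why the Lipschitz case yields only a sub-optimal rate: the MZ exponent $1 - 1/p$ for $p$ slightly below $2$ is strictly weaker than the $\calO(2^{-n})$ decay one gets from the second-difference structure in the LBS case, and this loss is what forces $0 < \delta < 1/2$ and shrinks the moment from $2^{d+1}$ down to $2 - \delta/2^d$.

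Having $\lVert \Delta_n \rVert_{p}$ decaying geometrically, I would then compute the $(2 - \delta/2^d)$-th moment of $R_d$ by summing $\bE[\lvert \Delta_n \rvert^{2-\delta/2^d}] / p_{r_d}(n)^{2-\delta/2^d - 1}$ against $p_{r_d}(n) = r_d(1-r_d)^n$, i.e.\ a geometric series in $n$ whose ratio involves $(1-r_d)^{-(1 - \delta/2^d)}$ multiplied by the decay rate of $\lVert \Delta_n\rVert$. Convergence of this series is exactly the constraint that pins down the admissible window for $k_d$ (equivalently $r_d = 1 - 2^{-k_d}$), and I expect the stated interval $k_d \in \big(1, \big(\tfrac{2^{d+2}-3\delta}{2^{d+3}-3\delta}\big)\big(\tfrac{2^{d+1}-\delta}{2^d-\delta}\big)\big)$ to emerge precisely as: the lower bound $k_d > 1$ (i.e.\ $r_d > 1/2$) ensures finite expected cost, and the upper bound ensures the geometric moment series converges. \textbf{The main obstacle} I anticipate is the moment inequality bookkeeping: one must carefully apply an MZ-type inequality at a fractional exponent $p = 2 - \delta/2^{d+1} \in (1,2)$, track how the LBL constant $L_d$ and the level-$(d+1)$ moment propagate, and verify that the resulting exponent arithmetic closes the induction by delivering exactly $2 - \delta/2^d$ at level $d$ (note $2 - \delta/2^d < 2 - \delta/2^{d+1}$, so the moment genuinely degrades by one level each step, consistent with descending $d$). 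Matching the algebra so that the series-convergence threshold reproduces the claimed closed-form upper bound on $k_d$ will require the most care.
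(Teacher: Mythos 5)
Your proposal follows essentially the same route as the paper's proof: backward induction from $d = D$, the Lipschitz bound $|\Delta_n| \le \tfrac{L_d}{2}\bigl|\So_{2^{n-1}}/2^{n-1} - \Se_{2^{n-1}}/2^{n-1}\bigr|$, a Marcinkiewicz--Zygmund moment estimate applied at an exponent strictly above the target $2-\delta/2^{d}$ (the paper uses the midpoint $q_d = 2 - \tfrac{3\delta}{2^{d+2}}$ rather than your choice $2-\delta/2^{d+1}$, but either one, followed by H\"older interpolation, makes the geometric series converge on the stated $k_d$-range), and then summation against $p_{r_d}(n)$. You have correctly identified the one genuinely nontrivial idea -- applying the moment inequality at the higher, level-$(d+1)$ exponent instead of directly at $2-\delta/2^{d}$, without which the series-convergence threshold would force $k_d < 1$ and no admissible parameter would exist.
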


Comparing Theorem \ref{thm:general D, second order case}, which requires the LBS assumption for $\{g_d\}_{d=0}^{D-1}$ and finite $2^{D+1}$-th moment for $g_D$, with Theorem \ref{thm:general D, Lipschitz case}, which only requires the LBL assumption for $\{g_d\}_{d=1}^{D-1}$ and finite second moment for $g_D$, it is clear that Theorem \ref{thm:general D, Lipschitz case} has more general assumptions. However, it does not guarantee that $\READ$ has a finite variance. Nevertheless, it remains unbiased and has a finite expected computational cost. To minimize the loss of moment guarantees, one can choose suitable parameters such that $\READ$ has finite $(2-\delta)$-th moment for any small $\delta$.

Again, let $R_{0,1}, R_{0,2}, \ldots, $ be the i.i.d. outcomes by repeatedly implementing Algorithm \ref{alg:recursive-rMLMC}. There are  more technical challenges when analyzing the convergence rate of $\sum_{i=1}^n R_{0,i}/n - \gamma_0$, as the CLT cannot be applied. Instead, we use the Marcinkiewicz-Zygmund generalized law of large numbers (see Theorem \ref{thm:MZ-LLN} in  Appendix \ref{sec:auxiliary}), which shows $n^{-1}\bE[\lvert \sum_{i=1}^n X_i\rvert^p]\rightarrow 0$ if $\{X_i\}_{i=1}^n$ are i.i.d., centered random variables with finite $p$-th moment for $p\in[1,2)$. Our result is the following: 
\begin{corollary}\label{cor: LBL convergence rate} With all the assumptions the same as Theorem \ref{thm:general D, Lipschitz case}, let $R_{0,1}, R_{0,2}, \ldots, $ be the i.i.d. outcomes by repeatedly implementing Algorithm \ref{alg:recursive-rMLMC}, we have:
    \begin{itemize}
        \item  $\bE[\lvert\sum_{i=1}^n R_{0,i}/n - \gamma_0\rvert] = o(n^{-1/(2(1+\delta))})$.
        \item We can construct an estimator $R$ with expected computational cost $\calO(\varepsilon^{-2(1+\delta)})$ such that the mean absolute error $\bE[\lvert R - \gamma_0 \rvert] < \varepsilon$.
    \end{itemize}
\end{corollary}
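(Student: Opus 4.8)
The plan is to derive both bullets directly from Theorem~\ref{thm:general D, Lipschitz case} together with the Marcinkiewicz--Zygmund generalized law of large numbers (Theorem~\ref{thm:MZ-LLN}). Applying Theorem~\ref{thm:general D, Lipschitz case} at depth $d=0$, each run of Algorithm~\ref{alg:recursive-rMLMC} with inputs \{depth $=0$, trajectory $=\varnothing$, $\calS$, parameters $(r_0,\ldots,r_{D-1})$\} returns an estimator that is unbiased ($\bE[R_0]=\gamma_0$), has finite expected computational cost $C(D)\coloneqq\prod_{k=0}^{D-1} r_k/(2r_k-1)<\infty$, and has finite $(2-\delta)$-th moment. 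Consequently the centered variables $X_i\coloneqq R_{0,i}-\gamma_0$ are i.i.d., mean zero, and satisfy $\bE[\lvert X_i\rvert^p]<\infty$ for $p\coloneqq 2-\delta$. Since $0<\delta<1/2$ we have $p\in(3/2,2)\subset[1,2)$, so the hypotheses of the MZ-LLN are met.

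For the first bullet, I would write $S_n\coloneqq\sum_{i=1}^n X_i = n\big(\tfrac1n\sum_{i=1}^n R_{0,i}-\gamma_0\big)$ and invoke Theorem~\ref{thm:MZ-LLN} to obtain $n^{-1}\bE[\lvert S_n\rvert^p]\to 0$, i.e.\ $\bE[\lvert S_n\rvert^p]=o(n)$. Passing from the $L^p$ to the $L^1$ norm via the power-mean (Jensen) inequality $\bE[\lvert S_n\rvert]\le(\bE[\lvert S_n\rvert^p])^{1/p}$ then gives
\[
\bE\left[\left\lvert \tfrac1n\sum_{i=1}^n R_{0,i}-\gamma_0\right\rvert\right]
= n^{-1}\bE[\lvert S_n\rvert]
= o\!\left(n^{1/p-1}\right)
= o\!\left(n^{-(1-\delta)/(2-\delta)}\right).
\]
It then remains to check that this decay is at least as fast as the advertised one, i.e.\ $(1-\delta)/(2-\delta)\ge 1/(2(1+\delta))$ for $0<\delta<1/2$; cross-multiplying the positive quantities reduces this to $2\delta^2\le\delta$, which holds on the whole range (with equality only at the endpoints). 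Since any sequence that is $o(n^{-a})$ is a fortiori $o(n^{-b})$ whenever $a\ge b$, the claimed rate $\bE[\lvert\sum_{i=1}^n R_{0,i}/n-\gamma_0\rvert]=o(n^{-1/(2(1+\delta))})$ follows.

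For the second bullet, I would take $R\coloneqq\frac1n\sum_{i=1}^n R_{0,i}$ and set $n=\lceil\varepsilon^{-2(1+\delta)}\rceil$. With this choice $n^{-1/(2(1+\delta))}=\Theta(\varepsilon)$, so the first bullet yields $\bE[\lvert R-\gamma_0\rvert]=o(\varepsilon)$, which is strictly below $\varepsilon$ for all sufficiently small $\varepsilon$. By linearity of expectation the total expected computational cost of the $n$ independent runs is $n\cdot C(D)=\calO(n)=\calO(\varepsilon^{-2(1+\delta)})$, as required.

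I do not anticipate a deep obstacle here: the substantive work was establishing the finite $(2-\delta)$-th moment and finite expected cost in Theorem~\ref{thm:general D, Lipschitz case}, and the remaining content is a clean invocation of the MZ-LLN plus elementary inequalities. The two points requiring care are bookkeeping rather than conceptual. First, unlike the finite-variance CLT argument used in Corollary~\ref{cor: computational cost, LBS case}, the MZ-LLN supplies only an $o(1)$ statement with no explicit constant, so one cannot literally invert a rate to solve for $n$; this is circumvented by fixing $n=\Theta(\varepsilon^{-2(1+\delta)})$ and verifying that the resulting MAE is $o(\varepsilon)$. Second, the exponent produced by the power-mean inequality, $(1-\delta)/(2-\delta)$, is in fact sharper than the stated $1/(2(1+\delta))$, so the verification of $(1-\delta)/(2-\delta)\ge 1/(2(1+\delta))$ is what licenses the clean (slightly loose) rate and the matching $\calO(\varepsilon^{-2(1+\delta)})$ cost advertised in the abstract.
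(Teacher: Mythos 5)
Your proposal is correct and follows essentially the same route as the paper's own proof: apply Theorem \ref{thm:MZ-LLN} with $p=2-\delta$ to the centered i.i.d.\ variables $X_i = R_{0,i}-\gamma_0$ (whose moment and unbiasedness hypotheses come from Theorem \ref{thm:general D, Lipschitz case}), pass to the $L^1$ norm by Jensen, and check $(1-\delta)/(2-\delta) > 1/(2(1+\delta))$ on $(0,1/2)$ before setting $n=\Theta(\varepsilon^{-2(1+\delta)})$. Your added remarks --- that the MZ-LLN gives no explicit constant and that the exponent $(1-\delta)/(2-\delta)$ is in fact sharper than the stated rate --- are accurate clarifications of steps the paper leaves implicit, not deviations.
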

\begin{proof}[Proof of Corollary \ref{cor: LBL convergence rate}]
    Applying Theorem \ref{thm:MZ-LLN} with $p = 2-\delta, X_i = R_{0,i} - \gamma_0$ and Jensen's inequality, we have:
    \begin{align*}
        &\bE\left[\left\lvert\sum_{i=1}^n R_{0,i}/n - \gamma_0\right\rvert\right] = n^{-1}\bE\left[\left\lvert \sum_{i=1}^n X_i \right\rvert\right]\\
        &\leq n^{-1} \left(\bE\left[\left\lvert \sum_{i=1}^n X_i \right\rvert^p\right]\right)^{1/p} = o(n^{-1 + \frac 1p}) = o(n^{\frac{-1}{2(1+\delta)}}),
    \end{align*}
    which proves the first part.  The last step follows from \\$1 - 1/(2-\delta) > 1/(2 + 2\delta)$ for $\delta \in (0,1/2)$. 
Setting $n = \Omega(\varepsilon^{-2(1+\delta)}) $ and the second part immediately follows. 
\end{proof}
Although we are not able to recover the optimal $n^{-1/2}$ convergence rate under this more general regime, our convergence rate is still near-optimal as it  can be as close to $n^{-1/2}$ as we want. Still, the convergence rate  does not depend on $D$, and, although we replace the MSE by MAE due to the moment constraint, one can still use Markov's inequality to show the absolute error $\lvert R -\gamma_0 \rvert$ is less than $\varepsilon/\delta$ with probability at least $1 - \delta$.

As the $\max$ function satisfies the LBL assumption, our results here include the optimal stopping problem as a special case. Our results complement the work of \cite{zhou2022unbiased}, where the authors use rMLMC to design an estimator with $\calO(\varepsilon^{-2})$ computational cost under stronger assumptions (see their Assumption 4). We have a slightly worse cost of $\calO(\varepsilon^{-2(1+\delta)})$ but under more general assumptions.
\section{Numerical experiments}\label{sec:numerical}
We test our algorithm on three examples. Some additional statistics and an extra experiment are provided in Appendix \ref{sec: additional statistics} and \ref{sec:extra experiment}. Our code is available at \url{https://github.com/guanyangwang/rMLMC_RNE}.

\subsection{A toy example}
We consider the following simple example with known ground-truth. Suppose the process $(\y0,\y1,\y2)$ satisfies $\y0\sim \bN(\pi/2,1), \y1 \sim \bN(\y0,1), \y2\sim \bN(\y1,1)$. Define $g_0(\y0, z) \coloneqq \sin(\y0+z), g_1(\yto1, z) \coloneqq \sin(\y1 - z)$, and $g_2(\yto2) \coloneqq \y2$. The target quantity $\gamma_0$ defined \eqref{eqn:nested target} is a nested expectation with  $D = 2$. One can use the formula $\bE_{Z\sim \bN(\mu,\sigma^2)}[\sin(Z)] = \sin(\mu) \exp(-\sigma^2/2)$ to analytically calculate $\gamma_0 = \exp(-1/2)\approx 0.6065$. Now we compare our $\READ$ estimator with the NMC estimators in \cite{rainforth2018nesting}. 

For the NMC estimator, users first specify $N_0, N_1, N_2$. Then we sample $N_0$ copies of $\y0$, $N_1$ copies of $\y1$ for each fixed $\y0$, and $N_2$ copies of $\y2$ for each fixed $\yto1$, and use these samples to form the NMC estimator, details are explained in Appendix \ref{sec: NMC}. Following \cite{rainforth18nesting_prob}, we consider two ways of allocating $(N_0, N_1, N_2)$. The first estimator NMC1 is to choose $N_0 = N_1 = N_2$, the second NMC2 is to choose $N_0 = N_1^2 = N_2^2$. For $\READ$, since all assumptions in Theorem \ref{thm:general D, second order case} are satisfied, therefore when $r_0\in (1/2, 3/4)$ and $r_1 \in (1/2, 1- 2^{-4/3})$, the $\READ$ estimator generated by Algorithm \ref{alg:recursive-rMLMC} is unbiased and of finite variance. Since the computational cost gets lower when each $r_i$ gets larger, we choose $r_0 = 0.74$ and $r_1 = 0.6$ (close to the upper-end of their respective ranges above) to facilitate the computational efficiency. Therefore, implementing Algorithm \ref{alg:recursive-rMLMC} once has an expected sample size/computational cost $\left(r_1/(2r_1 - 1)\right) \left(r_2/(2r_2 - 1)\right) \approx 4.625$. 


Our comparison result is summarized in Figure \ref{fig:error_comparison}. Since the NMC methods and $\READ$ have different ways of generating estimators. To make a fair comparison, we compare the estimation errors with the total sample cost used by these three estimators. For NMC1 and NMC2, the total sample cost is $n = N_0 N_1 N_2$. For $\READ$, the total sample cost is random, therefore we use its expected value, which equals $4.625 \times$ Number  of repetitions of Algorithm \ref{alg:recursive-rMLMC}. The slopes of the blue, red, and green lines, which correspond to the empirical convergence rate of $\READ$, NMC1, NMC2, equals $-0.97, -0.35, -0.47$, respectively. They match well with the theoretical predictions $n^{-1}$ in Corollary \ref{cor: computational cost, LBS case} for $\READ$, $n^{-1/3}$ for NMC1, and $n^{-1/2}$ for NMC2  in \cite{rainforth2018nesting}.  It is clear from Figure \ref{fig:error_comparison}(a) that  READ  has a significant advantage over NMC estimators, with both faster convergence rate and orders of magnitude lower errors.

\begin{figure}
    \centering
    \subfigure[]{\includegraphics[width=0.24\textwidth]{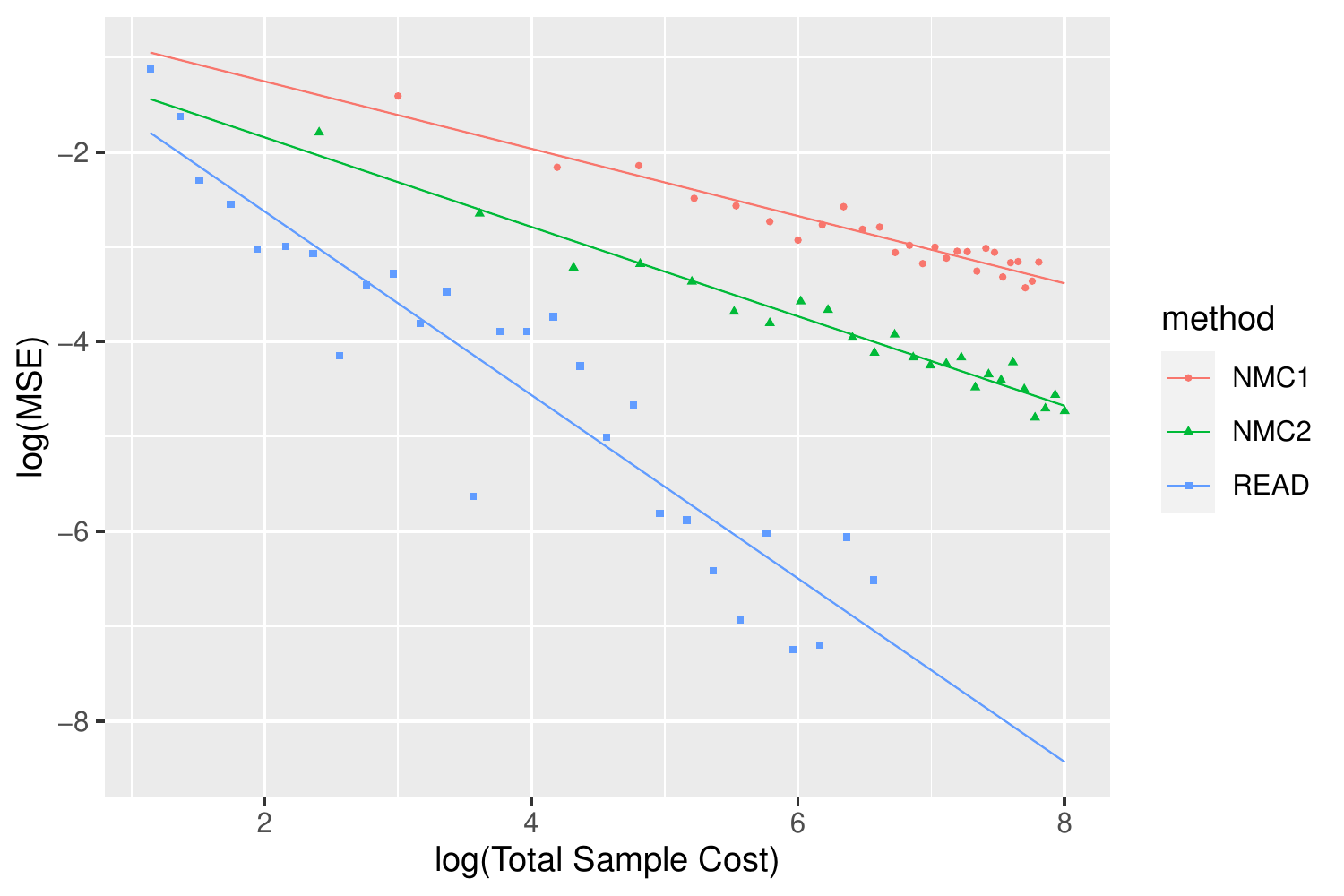}} 
    \subfigure[]{\includegraphics[width=0.23\textwidth]{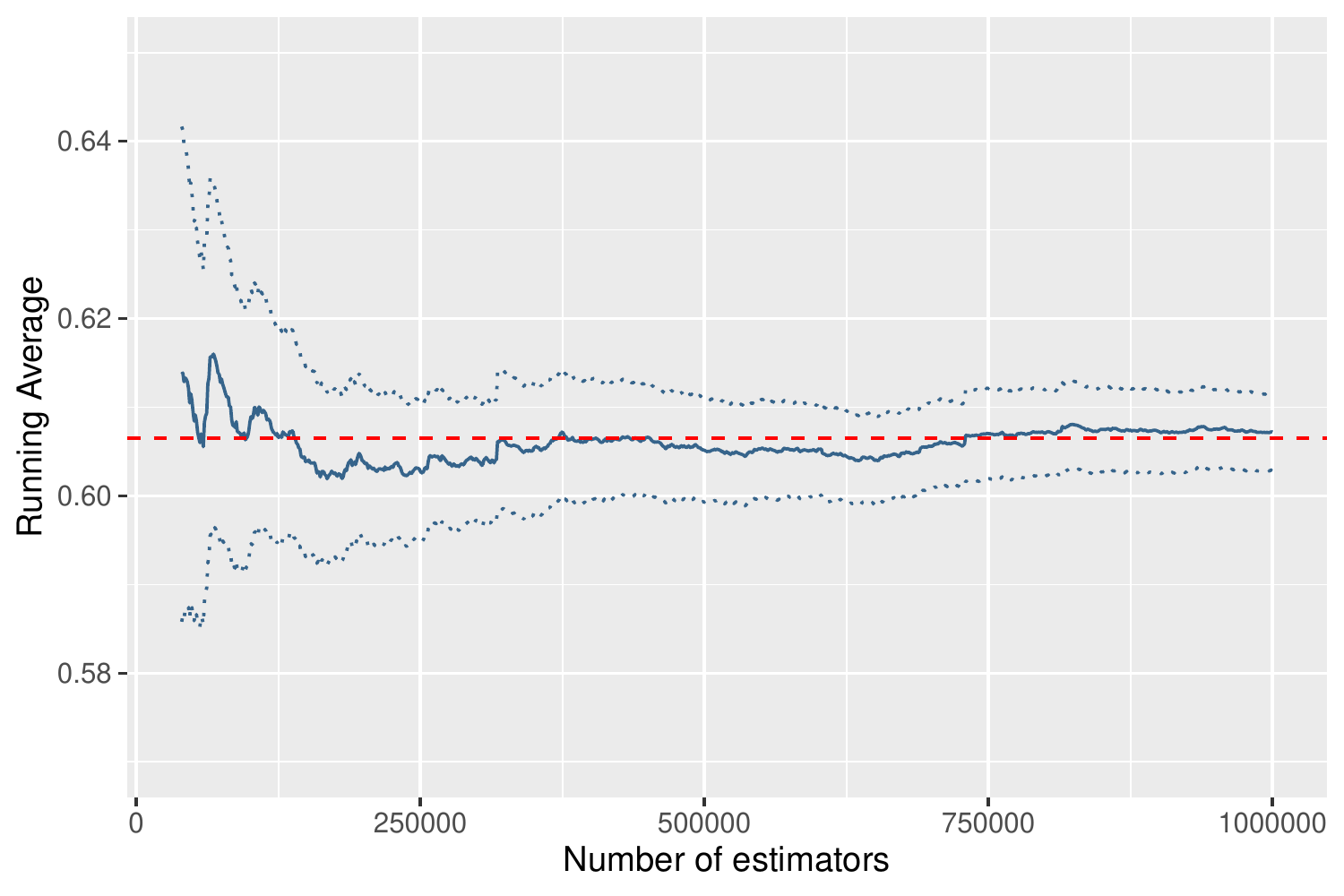}} 
    \caption{(a): The comparison on the empirical MSEs of estimating the RNE among $\READ$ (blue), NMC1 (red), and NMC2 (green). All the logarithms are of the base $10$. Each method's empirical errors are calculated based on $20$ independent repetitions. (b) The trace plot (solid blue curve) of the running averages of $\READ$.  The blue dotted curves are the $95\%$ confidence intervals. The red dashed line is the ground truth $\exp(-1/2)$.}
    \label{fig:error_comparison}
\end{figure}

We also call Algorithm \ref{alg:recursive-rMLMC} for $10^6$ times and plot the running averages of our estimates in Figure \ref{fig:error_comparison}(b). Our estimator becomes more accurate when we increase the number of repetitions. For each $k\in (1,2,\ldots, 10^6)$, we also calculate the standard deviation (SD) of the first $k$ repetitions and use Mean $\pm 1.96$ SD to form the $95\%$ confidence interval. It is also clear from Figure \ref{fig:error_comparison}(b) that our confidence intervals always include the ground-truth, suggesting the high accuracy of our method. In contrast, constructing confidence intervals of NMC estimators are much more time-consuming.

\subsection{Example with heavy-tail underlying distribution}
All three estimators are also evaluated on the same set of functions using an independent, non-central $t$-distribution with $10$ degrees of freedom and a noncentrality parameter of $0.5$. The outcomes of these tests are illustrated in Figure \ref{fig:comparison_t}. Despite the fact that the $t$-distribution exhibits a significantly heavier tail compared to the Gaussian distribution, it is evident from Figure \ref{fig:comparison_t}(a) that the convergence rate, as indicated by the speed at which each color converges to the black dotted line, is considerably faster for the $\READ$ method compared to the NMC estimators.
\begin{figure}
    \centering
    \subfigure[]{\includegraphics[width=0.24\textwidth]{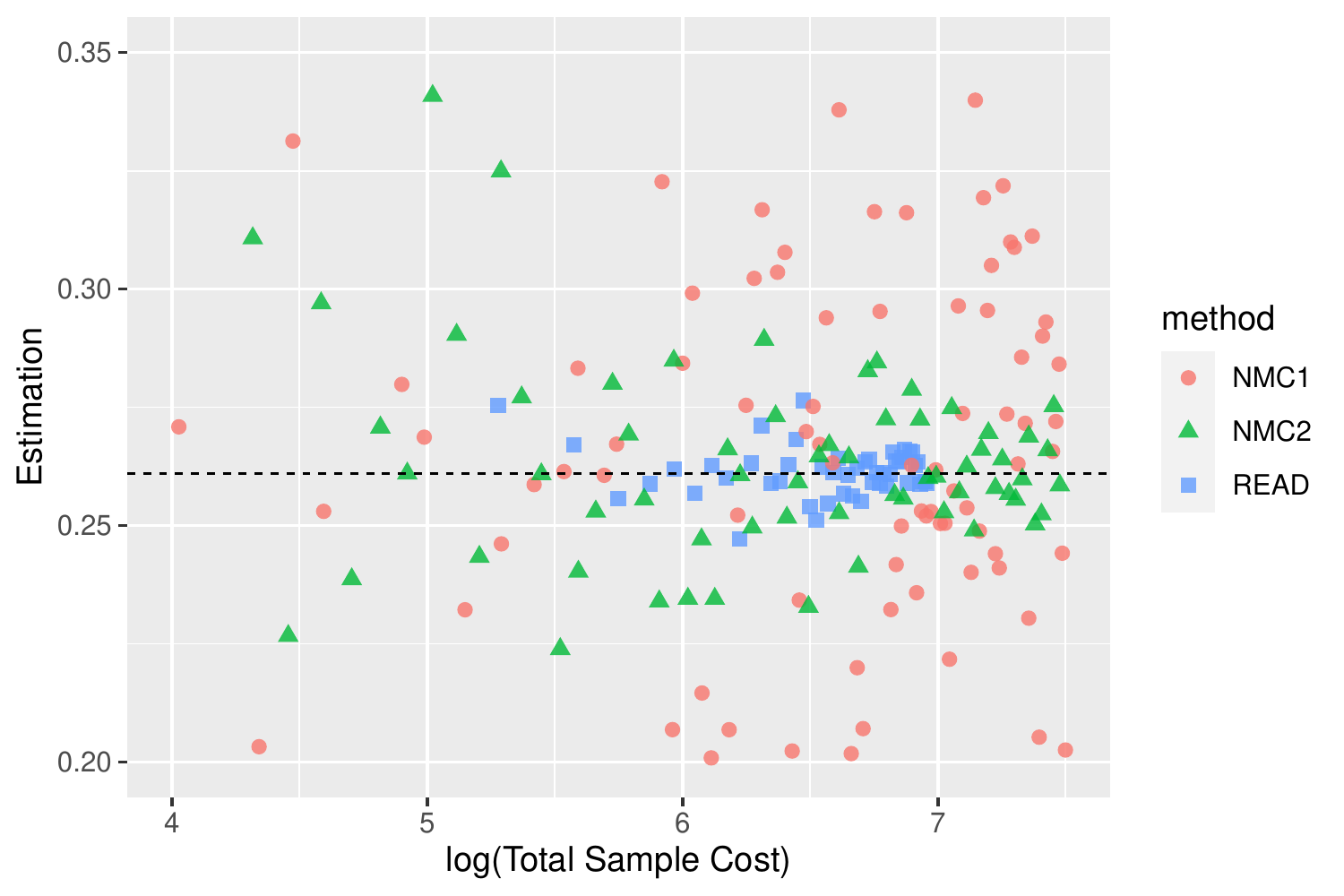}} 
    \subfigure[]{\includegraphics[width=0.23\textwidth]{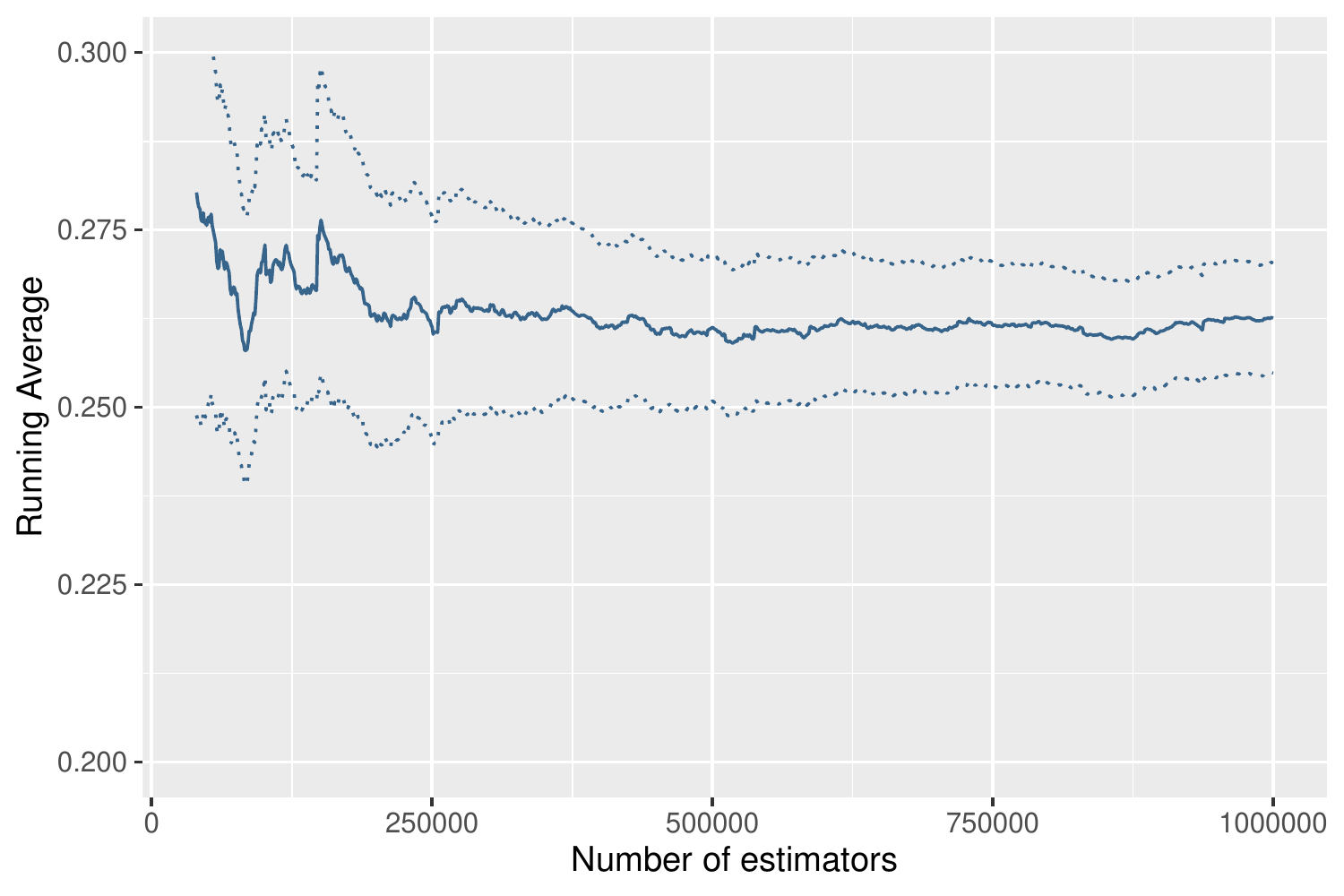}} 
    \caption{(a): Scatterplot of the estimation of $\gamma_0$. Blue, red, green points correspond to $\READ$, NMC1, NMC2 estimators respectively. (b) Trace plot (solid blue curve) of the running averages of $\READ$.  The blue dotted curves are the $95\%$ confidence intervals.  }
    \label{fig:comparison_t}
\end{figure}
\subsection{Pricing the Bermudan Options}
Finally, we utilize our method to price high-dimensional Bermudan basket put options.
 Given that option pricing can be formulated as an optimal stopping problem, our estimator simplifies to the MUSE estimator in \cite{zhou2022unbiased}. The underlying process $y^{(0:D)} = (S_0, S_{T/D}, S_{2T/D}, \ldots, S_{T})$ where $S_t$ is a $M$-dimensional geometric Brownion motion, each coordinate follows $d{S_i(t)} = (r-\delta)S_i(t) dt + \sigma S_i(t) dW_i(t).$ For $d\leq D-1$, our $g_d (y^{(0:d)}, z) := \max\{e^{-rT/D}, z\}$, where $U(x) := \max\{(K - \overline{x}), 0\}$. For $d = D$, $g_D(y^{(0:D)}) := U(y^{(D)})$. We also adopt the standard parameters in \cite{jain2012pricing, bender2006policy,zhou2022unbiased}: $T = 3, M = 5,\sigma = 0.2, r = 0.05, K = y^{(0)}_i = 100$ for every $i.$ 

 We follow previous works and set $D = 3$ . We tested $\READ$ on a $500$-core cluster, with each computer generating $10^4$ estimators. Our estimator is obtained by averaging all the $5$ million estimators. We also tested NMC1 and NMC2 on the same cluster. Each computer generates $10$ estimators for both methods. NMC1 uses $N_0 = N_1 = \cdots = 80$, NMC2 uses $N_0 = N_1^2 = .. = 900$. The results are summarized in Table \ref{tab:option, D3M5}. After comparing with existing algorithms tailored for option pricing/optimal stopping, we observe that $\READ$ aligns closely with the results of previous works. In contrast, both NMC1 and NMC2 exhibit a significant overestimation of the target. This discrepancy arises from the convex nature of the max function, which introduces a systematic bias in the NMC estimators. Therefore, our method provides more reliable estimates with a much shorter completion time. Similarly, we test the case where $D = 4$. $\READ$  generates $5$ million estimators over $500$ processors. NMC1 generates $5000$ estimators with $N_0 = 25$, NMC2 generates $5000$ estimators with $N_0 = 225$. The outcomes of these tests are also presented in Table \ref{tab:option, D3M5}.   Again, both NMC estimators  overestimate the target, albeit a smaller standard error.  
\begin{table}[htbp!]
\centering
  \resizebox{0.45\textwidth}{!}{%
\begin{tabular}{|c|c|c|c|}\hline
$D = 3$                    & $\READ$              & NMC1                    & NMC2                    \\\hline
Cost                 & $6.9\times 10^7$     & $2.05\times 10^{11}$          & $1.22\times 10^{11}$      \\
Time/s  & $(14.6, 22.6, 87.3)$ & $(116.1, 169.1, 215.9)$ & $(316.3, 350.7, 377.8)$ \\
Estimate (se)                     & $2.159 (0.008)$      & $2.169 (0.005)$         & $2.180 (0.014)$        \\\hline
$D = 4$                   & $\READ$              & NMC1                    & NMC2                    \\\hline
Cost                 & $2.36\times 10^8$     & $4.88\times 10^{10}$          & $5.69\times 10^{10}$      \\
Time/s  & $(19.8, 44.5, 299.8)$ & $(66.6, 118.2, 189.1)$ & $(316.3, 350.7, 377.8)$ \\
Estimate (se)                     & $2.284 (0.065)$      & $2.357 (0.008)$         & $2.393 (0.003)$ \\\hline
\end{tabular}%
}
\caption{Summary of results when $D = 3$. The three values in the ``Time" row correspond to the minimum, average, and maximum completion times across 500 processors.}
  \label{tab:option, D3M5}
\end{table}

\section{Further discussions}\label{sec:discussion}
Here we provide some remarks for practical implementation and discuss some potential generalizations. The users need to specify the parameters $\{r_d\}_{d=0}^{D-1}$ when implementing Algorithm \ref{alg:recursive-rMLMC}. Larger values of $r_i$ lead to a shorter time for each implementation but potentially larger variance. When some $r_i$ is not chosen according to Theorem \ref{thm:general D, second order case} or \ref{thm:general D, Lipschitz case}, the algorithm can still be implemented, but the variance may be infinite. The trade-off between the values of $\{r_d\}$ and the fluctuations of the resulting estimator is problem-specific. In practice, knowing how many repetitions are sufficient is important to provide an accurate estimator. One possible way is to bound certain moments of $\READ$ and use Corollary \ref{cor: computational cost, LBS case} or \ref{cor: LBL convergence rate} to choose a sufficiently large $n$. But this bound can be problem-specific and very conservative. Instead, we follow \cite{glynn2014exact} and suggest the following adaptive stopping rule: users first specify a precision-level $\varepsilon$ and a small $\delta\%$. When repeatedly implementing Algorithm \ref{alg:recursive-rMLMC}, users calculate the empirical $(1-\delta\%)$ confidence interval $[L_\delta(k), U_\delta(k)]$ for first $k$ repetitions in the same way as Section \ref{sec:numerical} for every $k$. Users can stop when the width of the confidence interval is less than $2\varepsilon$. The validity of this stopping rule is proven in \cite{glynn1992asymptotic}. 

One potential direction for extension is as follows. Here we only consider the `fix $D$' regime and construct estimators with optimality guarantees. However, the cost of Algorithm 1 scales exponentially with $D$. Therefore, although our algorithm is more efficient than the NMC estimator for every fixed $D$, both methods are not practical when $D$ becomes too  large. Indeed, the poor scaling with $D$ is a common issue in related literature such as \cite{glasserman2004number,zanger2013quantitative} and seems unavoidable. An interesting direction would be to construct  modifications of Algorithm \ref{alg:recursive-rMLMC} under extra practical assumptions for large or infinite $D$. For example, if we know that the `influence' of $\gamma_d$ on $\gamma_0$ decays exponentially or double-exponentially with $d$, it is then  sufficient to truncate the depth to $\tilde D \coloneqq \log(1/\varepsilon)$ or $\sqrt{\log(1/\varepsilon)}$. We hope to report progress in the future.
 
\section*{Acknowledgements}
The authors would like to thank Tom Rainforth, Takashi Goda, Pierre Jacob, and three referees for their helpful comments. Guanyang Wang gratefully acknowledges support by the National Science Foundation through grant  DMS-2210849 and the Adobe Data Science Research Award.

\bibliography{ref}

\begin{thebibliography}{34}
\providecommand{\natexlab}[1]{#1}
\providecommand{\url}[1]{\texttt{#1}}
\expandafter\ifx\csname urlstyle\endcsname\relax
  \providecommand{\doi}[1]{doi: #1}\else
  \providecommand{\doi}{doi: \begingroup \urlstyle{rm}\Url}\fi

\bibitem[Beck et~al.(2020)Beck, Jentzen, and Kruse]{beck2020nonlinear}
Beck, C., Jentzen, A., and Kruse, T.
\newblock Nonlinear {Monte C}arlo methods with polynomial runtime for
  high-dimensional iterated nested expectations.
\newblock \emph{arXiv preprint arXiv:2009.13989}, 2020.

\bibitem[Bender et~al.(2006)Bender, Kolodko, and
  Schoenmakers]{bender2006policy}
Bender, C., Kolodko, A., and Schoenmakers, J.
\newblock Policy iteration for american options: overview.
\newblock 2006.

\bibitem[Biswas et~al.(2019)Biswas, Jacob, and Vanetti]{biswas2019estimating}
Biswas, N., Jacob, P.~E., and Vanetti, P.
\newblock Estimating convergence of {Markov} chains with {L}-lag couplings.
\newblock In \emph{Advances in Neural Information Processing Systems},
  volume~32, 2019.

\bibitem[Blanchet \& Glynn(2015)Blanchet and Glynn]{Blanchet2015UnbiasedMC}
Blanchet, J.~H. and Glynn, P.~W.
\newblock Unbiased {Monte Carlo} for optimization and functions of expectations
  via multi-level randomization.
\newblock \emph{2015 Winter Simulation Conference (WSC)}, pp.\  3656--3667,
  2015.

\bibitem[Broadie et~al.(2011)Broadie, Du, and Moallemi]{broadie2011efficient}
Broadie, M., Du, Y., and Moallemi, C.~C.
\newblock Efficient risk estimation via nested sequential simulation.
\newblock \emph{Management Science}, 57\penalty0 (6):\penalty0 1172--1194,
  2011.

\bibitem[Dauchet et~al.(2018)Dauchet, Bezian, Blanco, Caliot, Charon, Coustet,
  El~Hafi, Eymet, Farges, Forest, et~al.]{dauchet2018addressing}
Dauchet, J., Bezian, J.-J., Blanco, S., Caliot, C., Charon, J., Coustet, C.,
  El~Hafi, M., Eymet, V., Farges, O., Forest, V., et~al.
\newblock Addressing nonlinearities in {Monte C}arlo.
\newblock \emph{Scientific reports}, 8\penalty0 (1):\penalty0 1--11, 2018.

\bibitem[Giles(2008)]{giles2008multilevel}
Giles, M.~B.
\newblock Multilevel monte carlo path simulation.
\newblock \emph{Operations research}, 56\penalty0 (3):\penalty0 607--617, 2008.

\bibitem[Giles(2018)]{giles2018mlmc}
Giles, M.~B.
\newblock {MLMC} for nested expectations.
\newblock In \emph{Contemporary Computational Mathematics-A Celebration of the
  80th Birthday of Ian Sloan}, pp.\  425--442. Springer, 2018.

\bibitem[Giles \& Goda(2019)Giles and Goda]{giles2019decision}
Giles, M.~B. and Goda, T.
\newblock Decision-making under uncertainty: using {MLMC} for efficient
  estimation of {EVPPI}.
\newblock \emph{Statistics and computing}, 29\penalty0 (4):\penalty0 739--751,
  2019.

\bibitem[Giles \& Haji-Ali(2019)Giles and Haji-Ali]{giles2019multilevel}
Giles, M.~B. and Haji-Ali, A.-L.
\newblock Multilevel nested simulation for efficient risk estimation.
\newblock \emph{SIAM/ASA Journal on Uncertainty Quantification}, 7\penalty0
  (2):\penalty0 497--525, 2019.

\bibitem[Giles \& Haji-Ali(2022)Giles and Haji-Ali]{giles2022multilevel}
Giles, M.~B. and Haji-Ali, A.-L.
\newblock Multilevel path branching for digital options.
\newblock \emph{arXiv preprint arXiv:2209.03017}, 2022.

\bibitem[Giles et~al.(2023)Giles, Haji-Ali, and Spence]{giles2023efficient}
Giles, M.~B., Haji-Ali, A.-L., and Spence, J.
\newblock Efficient risk estimation for the credit valuation adjustment.
\newblock \emph{arXiv preprint arXiv:2301.05886}, 2023.

\bibitem[Glasserman \& Yu(2004)Glasserman and Yu]{glasserman2004number}
Glasserman, P. and Yu, B.
\newblock Number of paths versus number of basis functions in {A}merican option
  pricing.
\newblock \emph{The Annals of Applied Probability}, 14\penalty0 (4):\penalty0
  2090--2119, 2004.

\bibitem[Glynn \& Rhee(2014)Glynn and Rhee]{glynn2014exact}
Glynn, P.~W. and Rhee, C.-h.
\newblock Exact estimation for {Markov} chain equilibrium expectations.
\newblock \emph{Journal of Applied Probability}, 51\penalty0 (A):\penalty0
  377--389, 2014.

\bibitem[Glynn \& Whitt(1992)Glynn and Whitt]{glynn1992asymptotic}
Glynn, P.~W. and Whitt, W.
\newblock The asymptotic validity of sequential stopping rules for stochastic
  simulations.
\newblock \emph{The Annals of Applied Probability}, 2\penalty0 (1):\penalty0
  180--198, 1992.

\bibitem[Goda et~al.(2022)Goda, Hironaka, Kitade, and Foster]{goda2022unbiased}
Goda, T., Hironaka, T., Kitade, W., and Foster, A.
\newblock Unbiased {MLMC} stochastic gradient-based optimization of bayesian
  experimental designs.
\newblock \emph{SIAM Journal on Scientific Computing}, 44\penalty0
  (1):\penalty0 A286--A311, 2022.

\bibitem[Gordy \& Juneja(2010)Gordy and Juneja]{gordy2010nested}
Gordy, M.~B. and Juneja, S.
\newblock Nested simulation in portfolio risk measurement.
\newblock \emph{Management Science}, 56\penalty0 (10):\penalty0 1833--1848,
  2010.

\bibitem[Gut(2005)]{Gut2005-GUTPAG}
Gut, A.
\newblock \emph{Probability: A Graduate Course}.
\newblock Springer, 2005.

\bibitem[He et~al.(2022)He, Xu, and Wang]{he2022unbiased}
He, Z., Xu, Z., and Wang, X.
\newblock Unbiased {MLMC}-based variational bayes for likelihood-free
  inference.
\newblock \emph{SIAM Journal on Scientific Computing}, 44\penalty0
  (4):\penalty0 A1884--A1910, 2022.

\bibitem[Heinrich(2001)]{heinrich2001multilevel}
Heinrich, S.
\newblock Multilevel {M}onte {C}arlo methods.
\newblock In \emph{International Conference on Large-Scale Scientific
  Computing}, pp.\  58--67. Springer, 2001.

\bibitem[Heinrich \& Sindambiwe(1999)Heinrich and
  Sindambiwe]{heinrich1999monte}
Heinrich, S. and Sindambiwe, E.
\newblock Monte {C}arlo complexity of parametric integration.
\newblock \emph{Journal of Complexity}, 15\penalty0 (3):\penalty0 317--341,
  1999.

\bibitem[Hu et~al.(2021)Hu, Chen, and He]{hu2021bias}
Hu, Y., Chen, X., and He, N.
\newblock On the bias-variance-cost tradeoff of stochastic optimization.
\newblock \emph{Advances in Neural Information Processing Systems},
  34:\penalty0 22119--22131, 2021.

\bibitem[Jacob et~al.(2020)Jacob, O’Leary, and
  Atchad{\'e}]{jacob2020unbiased}
Jacob, P.~E., O’Leary, J., and Atchad{\'e}, Y.~F.
\newblock Unbiased {Markov} chain {Monte Carlo} methods with couplings.
\newblock \emph{Journal of the Royal Statistical Society: Series B (Statistical
  Methodology)}, 82\penalty0 (3):\penalty0 543--600, 2020.

\bibitem[Jain \& Oosterlee(2012)Jain and Oosterlee]{jain2012pricing}
Jain, S. and Oosterlee, C.~W.
\newblock Pricing high-dimensional bermudan options using the stochastic grid
  method.
\newblock \emph{International Journal of Computer Mathematics}, 89\penalty0
  (9):\penalty0 1186--1211, 2012.

\bibitem[Kahale(2022)]{kahale2022unbiased}
Kahale, N.
\newblock Unbiased time-average estimators for {M}arkov chains.
\newblock \emph{arXiv preprint arXiv:2209.09581}, 2022.

\bibitem[McLeish(2011)]{mcleish2011general}
McLeish, D.
\newblock A general method for debiasing a {M}onte {C}arlo estimator.
\newblock \emph{Monte Carlo methods and applications}, 17\penalty0
  (4):\penalty0 301--315, 2011.

\bibitem[Rainforth(2018)]{rainforth18nesting_prob}
Rainforth, T.
\newblock Nesting probabilistic programs.
\newblock In \emph{Conference on Uncertainty in Artificial Intelligence}, pp.\
  249--258, 2018.

\bibitem[Rainforth et~al.(2018)Rainforth, Cornish, Yang, Warrington, and
  Wood]{rainforth2018nesting}
Rainforth, T., Cornish, R., Yang, H., Warrington, A., and Wood, F.
\newblock On nesting {Monte Carlo} estimators.
\newblock In \emph{International Conference on Machine Learning}, pp.\
  4267--4276. PMLR, 2018.

\bibitem[Rhee \& Glynn(2015)Rhee and Glynn]{rhee2015unbiased}
Rhee, C.-H. and Glynn, P.~W.
\newblock Unbiased estimation with square root convergence for {SDE} models.
\newblock \emph{Operations Research}, 63\penalty0 (5):\penalty0 1026--1043,
  2015.

\bibitem[Vihola(2018)]{vihola2018unbiased}
Vihola, M.
\newblock Unbiased estimators and multilevel {Monte C}arlo.
\newblock \emph{Operations Research}, 66\penalty0 (2):\penalty0 448--462, 2018.

\bibitem[Wang \& Wang(2022)Wang and Wang]{wang2022unbiased}
Wang, G. and Wang, T.
\newblock {Unbiased Multilevel Monte Carlo methods for intractable
  distributions: MLMC meets MCMC}.
\newblock \emph{arXiv preprint arXiv:2204.04808}, 2022.

\bibitem[Wang et~al.(2021)Wang, O’Leary, and Jacob]{wang2021maximal}
Wang, G., O’Leary, J., and Jacob, P.
\newblock Maximal couplings of the metropolis-hastings algorithm.
\newblock In \emph{International Conference on Artificial Intelligence and
  Statistics}, pp.\  1225--1233. PMLR, 2021.

\bibitem[Zanger(2013)]{zanger2013quantitative}
Zanger, D.~Z.
\newblock Quantitative error estimates for a least-squares {Monte Carlo
  algorithm for American option pricing}.
\newblock \emph{Finance and Stochastics}, 17\penalty0 (3):\penalty0 503--534,
  2013.

\bibitem[Zhou et~al.(2022)Zhou, Wang, Blanchet, and Glynn]{zhou2022unbiased}
Zhou, Z., Wang, G., Blanchet, J.~H., and Glynn, P.~W.
\newblock Unbiased optimal stopping via the {MUSE}.
\newblock \emph{Stochastic Processes and their Applications}, 2022.

\end{thebibliography}
\bibliographystyle{icml2023}

\newpage
\appendix
\onecolumn


\section{Auxiliary results}\label{sec:auxiliary}
The following theorem is from pg. 150 \citep{Gut2005-GUTPAG}.
\begin{theorem}\label{indep-MZ-ineq}
Let $p \geq 1$. Suppose that $X_1, X_2, ... X_n$ are independent random variables, with mean $0$ and $\bE|X_k|^p < \infty$ for all $k$, and let $S_n\coloneqq \sum_{i=1}^n X_i$ denote the partial sums. Then there exist constants $A_p^*, B_p^*$ depending only on $p$ such that
$$
A_p^*\bE\left( 
\sum_{k=1}^n X_k^2
\right)^{p/2}
\leq 
\bE|S_n|^p
\leq
B_p^*\bE\left( 
\sum_{k=1}^n X_k^2
\right)^{p/2}.
$$
\end{theorem}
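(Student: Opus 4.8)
The plan is to treat this as the classical Marcinkiewicz--Zygmund inequality and prove the two-sided comparison between $\bE|S_n|^p$ and the square function by \emph{symmetrization} followed by a conditional application of \emph{Khintchine's inequality}. This reduces the whole statement to comparing two square functions, after which the work splits into an easy regime and one genuinely delicate regime. Throughout I write $T\coloneqq \sum_{k=1}^n X_k^2$.

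First I would symmetrize. Let $(X_k')_k$ be an independent copy of $(X_k)_k$, set $\tilde X_k\coloneqq X_k-X_k'$ and $\tilde S_n\coloneqq\sum_k\tilde X_k$. Because each $X_k'$ has mean zero, Jensen's inequality gives $\bE|S_n|^p\le\bE|\tilde S_n|^p$, while the triangle inequality together with $\bE|S_n'|^p=\bE|S_n|^p$ gives $\bE|\tilde S_n|^p\le 2^p\bE|S_n|^p$; hence $2^{-p}\bE|\tilde S_n|^p\le\bE|S_n|^p\le\bE|\tilde S_n|^p$. The payoff of this step is that the $\tilde X_k$ are independent and symmetric, so $(\tilde X_k)_k\stackrel{d}{=}(\epsilon_k\tilde X_k)_k$ for i.i.d.\ Rademacher signs $\epsilon_k$ independent of everything else.

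Next I would condition on the magnitudes $(|\tilde X_k|)_k$ and apply Khintchine's inequality to the Rademacher sum $\sum_k\epsilon_k|\tilde X_k|$: for constants $c_p,C_p$ depending only on $p$,
\[
c_p\,\bE\Big(\textstyle\sum_k\tilde X_k^2\Big)^{p/2}\le\bE|\tilde S_n|^p\le C_p\,\bE\Big(\textstyle\sum_k\tilde X_k^2\Big)^{p/2}.
\]
Combined with the symmetrization bounds, the problem reduces to comparing $\bE(\sum_k\tilde X_k^2)^{p/2}$ with $\bE T^{p/2}$. The \emph{upper} comparison is easy for every $p\ge1$: from $\sum_k\tilde X_k^2\le 2(\sum_kX_k^2+\sum_kX_k'^2)$ together with subadditivity (for $p\le2$) or convexity (for $p\ge2$) of $t\mapsto t^{p/2}$ one gets $\bE(\sum_k\tilde X_k^2)^{p/2}\le c_p'\,\bE T^{p/2}$, which yields the upper half of the theorem with $B_p^*=C_pc_p'$. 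For the \emph{lower} comparison when $p\ge2$, the map $t\mapsto t^{p/2}$ is convex, so conditioning on $(X_k)_k$ and using $\bE[\tilde X_k^2\mid X_k]=X_k^2+\bE X_k^2\ge X_k^2$ with conditional Jensen gives $\bE(\sum_k\tilde X_k^2)^{p/2}\ge\bE T^{p/2}$, completing the lower half for $p\ge2$ with $A_p^*=2^{-p}c_p$.

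The main obstacle is the lower bound in the range $1\le p<2$, where $t\mapsto t^{p/2}$ is concave and the Jensen comparison above runs the wrong way; crude pointwise substitutes (for instance the reverse triangle inequality for the $\ell^2$ norm) are too lossy, as the example $X_k=\epsilon_k$, for which $T\equiv n$ is deterministic, already shows: there the reverse-triangle lower bound degenerates to $0$ while the true value is of order $n^{p/2}$. To close this case I would either (i) invoke the Burkholder--Davis--Gundy inequality, observing that $(S_k)_{k\le n}$ is a martingale whose quadratic variation is exactly $T$, so $\bE(\max_{k\le n}|S_k|)^p\asymp\bE T^{p/2}$ for all $p\ge1$, and then recover $\bE|S_n|^p\gtrsim\bE T^{p/2}$ (the delicate $p=1$ endpoint being handled by a good-$\lambda$/weak-type argument); or (ii) run a truncation-and-interpolation argument anchored at the exact identity $\bE S_n^2=\bE T$. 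Since the statement is quoted verbatim from \citep{Gut2005-GUTPAG}, in the paper itself I would simply cite that reference rather than reproduce the low-moment argument.
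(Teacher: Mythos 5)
The paper does not prove this statement at all: it is quoted verbatim from Gut (2005, p.~150) and used as a black box, so there is no internal argument to compare yours against. Your sketch is the standard textbook route (symmetrization, conditional Khintchine, then a comparison of the two square functions), and the parts you spell out are correct: the two-sided symmetrization bounds, the conditional Khintchine step, the upper square-function comparison for all $p\ge 1$, and the lower comparison for $p\ge 2$ via $\bE[\tilde X_k^2\mid X_k]=X_k^2+\bE X_k^2\ge X_k^2$ and conditional Jensen all check out. You also correctly isolate the one genuinely delicate piece, the lower bound for $1\le p<2$, where concavity reverses Jensen; your proposed repairs (BDG on the martingale $(S_k)$ with quadratic variation $T$, combined with L\'evy's inequality for the symmetrized sequence or Doob for $p>1$ to pass from $\max_k|S_k|$ back to $|S_n|$) are legitimate, though you leave them at the level of a plan rather than a proof. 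Two remarks worth recording: first, your final instinct to simply cite Gut is exactly what the authors do; second, the half you cannot fully close is immaterial here, since the paper only ever invokes the \emph{upper} Marcinkiewicz--Zygmund bound (through Corollary~\ref{cor:iid-MZ-ineq} and Lemma~\ref{lem:conditional-MZ}), and that direction your argument establishes completely.
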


The following corollary to the above theorem is from pg. 151 \citep{Gut2005-GUTPAG}, Corollary 8.2. 
\begin{corollary}\label{cor:iid-MZ-ineq}
Let $p \geq 1$. Suppose that $X_1, X_2, \ldots$ are i.i.d. random variables, with mean $0$ and $\bE|X_1|^p < \infty$, and let  $S_n\coloneqq \sum_{i=1}^n X_i$ denote the partial sums. Then there exists a constant $B_p$ depending only on $p$, such that
$$
\bE|S_n|^p
\leq
\begin{cases}
B_pn^{p/2}\bE|X_1|^{p}, & p > 2 \\
B_pn\bE|X_1|^{p}, & 1 \leq p \leq 2.
\end{cases}
$$
\end{corollary}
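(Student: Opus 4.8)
The plan is to derive this corollary directly from the two-sided bound in Theorem~\ref{indep-MZ-ineq}, which I may invoke since the $X_k$ are in particular independent, mean-zero, and have finite $p$-th moment. Using only the \emph{upper} half of that theorem, the task reduces to controlling the single quantity $\bE\big(\sum_{k=1}^n X_k^2\big)^{p/2}$; once this is bounded by the right-hand side of each claimed case, I absorb the universal constant $B_p^*$ into a new constant $B_p$ and am done.

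The key idea is to split according to whether the exponent $q \coloneqq p/2$ exceeds $1$, since the nonnegative variables $Y_k \coloneqq X_k^2$ are then acted on by either a convex or a concave power. \textbf{Case $p > 2$ (so $q > 1$).} Here $t \mapsto t^q$ is convex, so Jensen's inequality applied to the uniform average gives $\big(n^{-1}\sum_k Y_k\big)^q \le n^{-1}\sum_k Y_k^q$, that is, $\big(\sum_k Y_k\big)^{q} \le n^{q-1}\sum_k Y_k^{q} = n^{p/2 - 1}\sum_k |X_k|^p$. Taking expectations and using that the $X_k$ are identically distributed yields $\bE\big(\sum_k X_k^2\big)^{p/2} \le n^{p/2-1}\cdot n\,\bE|X_1|^p = n^{p/2}\,\bE|X_1|^p$. \textbf{Case $1 \le p \le 2$ (so $0 < q \le 1$).} Now $t\mapsto t^q$ is subadditive on $[0,\infty)$, so $\big(\sum_k Y_k\big)^q \le \sum_k Y_k^q = \sum_k |X_k|^p$, and taking expectations with identical distribution gives $\bE\big(\sum_k X_k^2\big)^{p/2} \le n\,\bE|X_1|^p$. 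Combining each case with the upper bound $\bE|S_n|^p \le B_p^*\,\bE\big(\sum_k X_k^2\big)^{p/2}$ from Theorem~\ref{indep-MZ-ineq} and setting $B_p \coloneqq B_p^*$ completes the proof; note the two bounds agree at the boundary $p = 2$, where $n^{p/2} = n$.

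I do not expect any serious obstacle: the argument is a deterministic pointwise inequality on $\big(\sum_k X_k^2\big)^{p/2}$ followed by one use of the i.i.d.\ assumption, so the only thing to get right is matching the convexity (for $p>2$) versus subadditivity (for $p\le 2$) of $t^{p/2}$ to the correct regime of $p$. The one point worth stating explicitly is that this corollary uses \emph{only} the upper inequality of Theorem~\ref{indep-MZ-ineq}; the lower inequality plays no role here.
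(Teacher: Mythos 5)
Your proof is correct. The paper does not actually prove this corollary itself---it quotes it verbatim from \citep{Gut2005-GUTPAG} (Corollary 8.2, p.~151)---and your derivation from the upper half of Theorem~\ref{indep-MZ-ineq} is precisely the standard argument behind that cited result: for $p>2$ you bound $\bE\bigl(\sum_{k=1}^n X_k^2\bigr)^{p/2}$ via convexity of $t\mapsto t^{p/2}$ (the source uses the equivalent route of Minkowski's inequality in $L^{p/2}$, yielding the same $n^{p/2}$ factor), and for $1\le p\le 2$ via subadditivity of $t\mapsto t^{p/2}$, with the identical-distribution assumption collapsing $\sum_k\bE|X_k|^p$ to $n\,\bE|X_1|^p$ in both regimes. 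All hypotheses are used where they should be: independence and mean zero enter only through Theorem~\ref{indep-MZ-ineq}, and the two bounds agree at $p=2$, so the argument is complete as stated.
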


The following lemma is instrumental in the proofs for the theoretical guarantees of our algorithm under both the LBS and LBL assumptions.

\begin{lemma}\label{lem:conditional-MZ}
Let $(Z_1,Z_2)$ be a 2-stage stochastic process and there exists $p \geq 1$, such that $ \bE[|Z_2|^p] < \infty$. Conditioning on $Z_1$, sample $i.i.d.$ $Z_2(1),...,Z_2(n)$. Then,
\[
\bE
\left[\left|
\frac1n \sum_{i=1}^n Z_2(i)
-
\bE[Z_2 \mid Z_1]
\right|^p
\right]
\leq
\begin{cases}
B_p'
\frac{\bE[|Z_2|^p]}{n^{p/2}} & p > 2 \\

B_p'
\frac{\bE[|Z_2|^p]}{n^{p - 1}} & 1 \leq p \leq 2
\end{cases}
\]
\end{lemma}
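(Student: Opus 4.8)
The plan is to reduce the statement to a conditional application of the Marcinkiewicz--Zygmund inequality in Corollary~\ref{cor:iid-MZ-ineq}, followed by the tower property. First I would fix the value of $Z_1$ and work on the conditional probability space given $Z_1$. On this space the samples $Z_2(1), \ldots, Z_2(n)$ are i.i.d. with conditional mean $\bE[Z_2 \mid Z_1]$, so the centered variables $X_i \coloneqq Z_2(i) - \bE[Z_2 \mid Z_1]$ are i.i.d., have conditional mean zero, and their partial sum satisfies $S_n = \sum_{i=1}^n X_i = n\big(\tfrac1n\sum_{i=1}^n Z_2(i) - \bE[Z_2\mid Z_1]\big)$. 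Hence the quantity inside the outer expectation is exactly $|S_n|^p/n^p$.

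Next I would apply Corollary~\ref{cor:iid-MZ-ineq} conditionally on $Z_1$ to the i.i.d. centered variables $X_i$, obtaining for a constant $B_p$ depending only on $p$:
\[
\bE\!\left[|S_n|^p \mid Z_1\right]
\leq
\begin{cases}
B_p\, n^{p/2}\, \bE[|X_1|^p \mid Z_1], & p > 2,\\
B_p\, n\, \bE[|X_1|^p \mid Z_1], & 1 \leq p \leq 2.
\end{cases}
\]
Dividing by $n^p$ converts the exponents $n^{p/2}$ and $n$ into $n^{-p/2}$ and $n^{-(p-1)}$, matching the two cases in the statement. It then remains to bound the conditional moment $\bE[|X_1|^p \mid Z_1]$ by $\bE[|Z_2|^p \mid Z_1]$: since $\bE[Z_2 \mid Z_1]$ is $Z_1$-measurable, the elementary centering bound $|a-b|^p \leq 2^{p-1}(|a|^p + |b|^p)$ together with conditional Jensen $|\bE[Z_2\mid Z_1]|^p \leq \bE[|Z_2|^p \mid Z_1]$ yields $\bE[|X_1|^p \mid Z_1] \leq 2^p\, \bE[|Z_2|^p \mid Z_1]$.

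Finally I would take the expectation over $Z_1$ and invoke the tower property $\bE\big[\bE[\,\cdot\,\mid Z_1]\big] = \bE[\,\cdot\,]$, so that $\bE\big[\bE[|Z_2|^p \mid Z_1]\big] = \bE[|Z_2|^p]$; collecting the constants and setting $B_p' \coloneqq 2^p B_p$ gives the claimed inequality. The one step I would be most careful about is justifying the conditional application of Corollary~\ref{cor:iid-MZ-ineq}: it requires $\bE[|Z_2|^p \mid Z_1] < \infty$ almost surely so that the conditional hypothesis holds for almost every value of $Z_1$, which follows from $\bE[|Z_2|^p] < \infty$ via the tower property, and it requires $B_p$ to be independent of $Z_1$, which holds because $B_p$ depends only on $p$. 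Everything else is routine bookkeeping of constants.
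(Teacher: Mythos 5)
Your proposal is correct and follows essentially the same route as the paper's own proof: condition on $Z_1$, apply Corollary~\ref{cor:iid-MZ-ineq} to the centered i.i.d. variables, bound the centered conditional moment by $2^p\,\bE[|Z_2|^p\mid Z_1]$ via the $2^{p-1}$-inequality and Jensen, then integrate out $Z_1$. The only cosmetic difference is that you phrase the final step via the tower property where the paper writes the integral over $\pi_1(\df z_1)$ explicitly.
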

\begin{proof}
Let $p > 2$. For arbitrary fixed $Z_1 = z_1$,  define $\bar Z_2(i)\coloneqq Z_2(i) - \bE[Z_2(i)\mid Z_1 = z_1]$, and apply Corollary \ref{cor:iid-MZ-ineq} on the $i.i.d.$ mean $0$ random variables $\bar Z_2(i) $ under the probability distribution $\pi(\cdot \mid Z_1 = z_1)$, we have 
\begin{align*}
\bE
\left[ 
\left| 
\frac1n \sum_{i=1}^n Z_2(i) - \bE[Z_2 \mid Z_1]
\right|^p
\right] 
&=
\int_\Omega 
\bE
\left[ 
\left| 
\frac1n \sum_{i=1}^n Z_2(i) - \bE[Z_2 \mid Z_1]
\right|^p
\mid 
Z_1 = z_1
\right] 
\pi_1(\df z_1) \\
& = \frac{1}{n^p} \bE[\left| \sum_{i=1}^n \bar Z_2(i)\right|^p \mid Z_1 = z_1] \pi_1(\df z_1)\\
& \leq \frac{B_p}{n^{p/2}}\bE[\lvert \bar Z_2(1)\rvert^p \mid Z_1 = z_1] \pi_1(\df z_1) \\
& \leq \frac{B_p'}{n^{p/2}} \bE[\lvert  Z_2(1)\rvert^p \mid Z_1 = z_1] \pi_1(\df z_1) \\
& = \frac{B_p'}{n^{p/2}} \bE[\lvert  Z_2(1)\rvert^p\rvert].
\end{align*}

The second inequality follows from the inequality $(a+b)^p\leq 2^{p-1}(\lvert a \rvert^p + \lvert b \rvert^p)$ and the monotonicity of a random variable's $L^p$ norm :
\begin{align*}
    \bE[|X- \bE[X]|^p] \leq 2^{p-1} (\bE[|X|^p] + \lvert\bE[X]\rvert^p ) \leq 2^{p} \bE[\lvert X\rvert^p]
\end{align*}

For the case $1 \leq p \leq 2$, the calculation is identical to the above, except replace the $B_p'/n^{p/2}$ with $B_p'/n^{p-1}$ from Corollary \ref{cor:iid-MZ-ineq}.
\end{proof}

The following theorem is the Marcinkiewicz-Zygmund law of large numbers from pg. 311 \citep{Gut2005-GUTPAG}, which gives us the $\calO\left(\varepsilon^{-2(1 + \delta)}\right)$ sampling complexity for the LBL case for  $0 < \delta < 1/2$. 
\begin{theorem}\label{thm:MZ-LLN}
Suppose that $X_1, X_2,...$ are i.i.d. random variables., and set $S_n = \sum_{k=1}^n X_k, n \geq 1$. If $\bE|X_1|^p < \infty$ and $\bE [X_1] = 0$ when $1 \leq p < 2$, then
\[
\bE 
\left| 
\frac{S_n}{n^{1/p}}
\right|^p
=
\bE 
\frac{|S_n|^p}{n}
\stackrel{n \to \infty}{\longrightarrow}
0.
\]
\end{theorem}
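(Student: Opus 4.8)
The plan is to prove the equivalent statement $\bE|S_n|^p = o(n)$ for $1 \le p < 2$, since $\bE|S_n/n^{1/p}|^p = \bE|S_n|^p/n$. The Marcinkiewicz--Zygmund moment inequality (Corollary \ref{cor:iid-MZ-ineq}) already gives $\bE|S_n|^p \le B_p\, n\,\bE|X_1|^p$, so $\bE|S_n|^p/n$ is \emph{bounded}; the entire difficulty is upgrading boundedness to convergence to $0$, and this gain comes from a truncation at the natural scale $a_n \coloneqq n^{1/p}$. Writing each centered summand as $X_k = U_k + V_k$ with
\[
U_k \coloneqq X_k\mathbf 1_{\{|X_k|\le a_n\}} - \bE\!\left[X_k\mathbf 1_{\{|X_k|\le a_n\}}\right], \quad V_k \coloneqq X_k\mathbf 1_{\{|X_k|> a_n\}} - \bE\!\left[X_k\mathbf 1_{\{|X_k|> a_n\}}\right],
\]
(which indeed sum to $X_k$ because $\bE[X_k]=0$), the inequality $|a+b|^p\le 2^{p-1}(|a|^p+|b|^p)$ reduces the claim to showing $\bE|\sum_k U_k|^p/n\to 0$ and $\bE|\sum_k V_k|^p/n\to 0$ separately.

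For the tail part, the $V_k$ are (for fixed $n$) i.i.d., mean zero, with finite $p$-th moment, so Corollary \ref{cor:iid-MZ-ineq} applies directly and gives $\bE|\sum_k V_k|^p \le B_p\, n\,\bE|V_1|^p$. Since centering at most doubles the $L^p$ norm, $\bE|V_1|^p \le 2^p\,\bE[|X_1|^p\mathbf 1_{\{|X_1|>a_n\}}]$, whence $\bE|\sum_k V_k|^p/n \le 2^p B_p\,\bE[|X_1|^p\mathbf 1_{\{|X_1|>a_n\}}]$. As $a_n\to\infty$ and $\bE|X_1|^p<\infty$, dominated convergence forces this to $0$. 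The crucial point here is that invoking the Marcinkiewicz--Zygmund inequality (rather than the triangle/Minkowski inequality, which would produce a spurious $n^{p-1}$ factor) keeps the $n$-dependence linear, so that division by $n$ leaves only the vanishing tail mass.

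For the bounded part, the Marcinkiewicz--Zygmund bound alone only yields boundedness, so I would instead exploit $|U_1|\le 2a_n$. Since $p\le 2$, monotonicity of $L^p$ norms (Lyapunov) gives $\bE|\sum_k U_k|^p \le (\bE|\sum_k U_k|^2)^{p/2} = (n\,\var(U_1))^{p/2}\le (n\,\bE[X_1^2\mathbf 1_{\{|X_1|\le a_n\}}])^{p/2}$, so that
\[
\frac{\bE|\sum_k U_k|^p}{n} \le \left(n^{1-2/p}\,\bE\!\left[X_1^2\mathbf 1_{\{|X_1|\le a_n\}}\right]\right)^{p/2} = \left(a_n^{\,p-2}\,\bE\!\left[X_1^2\mathbf 1_{\{|X_1|\le a_n\}}\right]\right)^{p/2}.
\]
It therefore suffices to prove the deterministic limit $a^{p-2}\,\bE[X_1^2\mathbf 1_{\{|X_1|\le a\}}]\to 0$ as $a\to\infty$.

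I expect the main obstacle to lie precisely in this last limit. A single Hölder/Lyapunov estimate $\bE[X_1^2\mathbf 1_{\{|X_1|\le a\}}]\le a^{2-p}\bE|X_1|^p$ again only gives boundedness, so the fix is a two-level split: for fixed $M$ and $a>M$, bound $\bE[X_1^2\mathbf 1_{\{|X_1|\le M\}}]\le M^2$ (its contribution vanishes because $a^{p-2}\to 0$), while on $\{M<|X_1|\le a\}$ use $X_1^2\le a^{2-p}|X_1|^p$ to get $a^{p-2}\bE[X_1^2\mathbf 1_{\{M<|X_1|\le a\}}]\le \bE[|X_1|^p\mathbf 1_{\{|X_1|>M\}}]$. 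Taking $\limsup_{a\to\infty}$ and then $M\to\infty$, dominated convergence sends the right-hand side to $0$. Combining the two parts yields $\bE|S_n|^p/n\to 0$, which is the assertion.
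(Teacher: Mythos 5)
You have proved a statement that the paper itself does not prove: Theorem \ref{thm:MZ-LLN} is quoted as a known auxiliary result from page 311 of Gut (2005), so there is no internal proof to compare against. Your argument is a correct, self-contained proof for $1 \le p < 2$, which is the only range the paper ever invokes (it applies the theorem with $p = 2-\delta$). Every step checks out: the decomposition $X_k = U_k + V_k$ is valid precisely because $\bE[X_k] = 0$; applying Corollary \ref{cor:iid-MZ-ineq} at fixed $n$ to the i.i.d.\ centered tail variables $V_k$ is legitimate (the truncation level depends on $n$, but the inequality is a finite-$n$ statement with a constant depending only on $p$), and it keeps the bound linear in $n$ so that division by $n$ leaves exactly the vanishing tail mass $\bE[|X_1|^p \mathbf{1}_{\{|X_1| > n^{1/p}\}}]$; the Lyapunov/second-moment bound on the truncated part correctly reduces everything to the deterministic limit $a^{p-2}\,\bE[X_1^2 \mathbf{1}_{\{|X_1| \le a\}}] \to 0$; and your two-level split ($|X_1| \le M$ versus $M < |X_1| \le a$) is exactly the right device for that limit, since, as you note, a single H\"older estimate only yields boundedness. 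This truncation-at-$n^{1/p}$ argument is essentially the standard textbook proof of the Marcinkiewicz--Zygmund $L^p$ law of large numbers, and it has the added merit of being built entirely from a result already stated in the paper (Corollary \ref{cor:iid-MZ-ineq}), so it would make the appendix self-contained where the paper instead relies on an external citation.
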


\newpage
\section{Proof of Theorem \ref{thm:general D, second order case}}

Recall that we are considering an $M$-dimensional process $\yto{D}$, i.e. $\y{d} \in \bR^M$ for each $0 \leq d \leq D$. Explicitly, for $d \in \{0,...,D-1\}$, $g_d : \bR^{(d+1)M+1} \to \bR$  and $g_D : \bR^{(D+1)M} \to \bR$.

Then, we say $\{g_d\}_{d=0}^{D-1}$ satisfy the last-component bounded second derivative condition (LBS) if for $z \in \bR$:
\[
\sup_{(\yto{d}, z)} 
\left|\partial^2_{(d+1)M + 1} g_d(\yto{d}, z)\right| < K_d.
\]

\begin{proof}
\textbf{Case 1:~~}$d = D$ \\

When $d = D$, Algorithm \ref{alg:recursive-rMLMC} samples one $y^{(D)} \sim \pi_{D}$ and outputs $R_D(\yto{D-1}) := g_D(\yto{D})$. We first prove our output $R_D(\yto{D-1})$ has a finite expectation for almost  every fixed $\yto{D-1}$, then its expectation equals $\gamma_D(\yto{D-1})$ follows directly from the algorithm design. To show the first point, notice that the  expectation of $|R_D(\yto{D-1})|$ given $\yto{D-1}$  equals the conditional expectation $\bE[\lvert g_D(\yto{D})\rvert\mid \yto{D-1}]$. Since $\bE[|g_D|] < \infty$ by assumption, we have $\bE[\lvert g_D(\yto{D})\rvert\mid \yto{D-1}] < \infty$, almost surely. Therefore Algorithm \ref{alg:recursive-rMLMC} is unbiased when $d = D$ for almost  every input $\yto{D-1}$. Furthermore, it has computational cost $1$, and the output has finite $2^{D+1}$-st moment.

\textbf{Case 2:~~}$0\leq d \leq D-1$\\

Now that our base case is proven, we proceed via backwards induction. Suppose unbiasedness, finite $2^{d+2}$-th moment, and finite expected computational cost are all satisfied for $d+1$ where $0 \leq d \leq D - 1$. Conditioning on $\yto{d-1}$, we sample $y^{(d)} \sim \pi_d$ and $N_{d} \sim \Geo(r_{d})$.  Algorithm \ref{alg:recursive-rMLMC} will call itself independently for $2^{N_{d}}$ times, each with input \{Depth index: $d+1$, Trajectory History: $H = \yto{d}$, Parameters: $r_{d+1}, \cdots, r_{D-1}$\}. This gives us i.i.d. samples $R_{d+1}(\yto{d})(1),...,R_{d+1}(\yto{d})(2^{N_{d}})$ which are used to compute the following:
\begin{align*}
S_{2^{N_{d}}}
&=
R_{d+1}(\yto{d})(1) +
R_{d+1}(\yto{d})(2) +
\cdots +
R_{d+1}(\yto{d})(2^{N_{d}}), \\
\So_{2^{N_{d} - 1}}
&=
R_{d+1}(\yto{d})(1) +
R_{d+1}(\yto{d})(3) +
\cdots +
R_{d+1}(\yto{d})(2^{N_{d}} - 1), \\
\Se_{2^{N_{d} - 1}}
&=
R_{d+1}(\yto{d})(2) +
R_{d+1}(\yto{d})(4) +
\cdots +
R_{d+1}(\yto{d})(2^{N_{d}}).
\end{align*}

Then Algorithm \ref{alg:recursive-rMLMC} returns as output $R_{d} = \Delta_{N_{d}} / p_{r_d}(N_{d})$, where $\Delta_{N_{d}}$ is the antithetic quantity in Algorithm \ref{alg:recursive-rMLMC}. By the inductive hypothesis on $d+1$, we have for almost  every $\yto{d}$:
\[
\bE_{\pi_{d+1 : D}}
[
R_{d + 1} \mid \yto{d}
]
=
\gamma_{d + 1}(\yto{d}),
\quad
\bE_{\pi}\left[ |R_{d+1}|^{2^{d+2}} \right] 
< 
\left(
\prod_{i = d + 1}^{D} \tilde{C}_i
\right)
\norm{g_D(\yto{D})}_{\pi, 2^{D+1}}^{2^{D+1}}.
\]
We will start with showing  $R_d$ has a finite computational cost and a finite $2^{d+1}$-th moment, and then show the unbiasedness. \\
\underline{Finite cost:}\\
To show the finite expected computational cost, recall that implementing Algorithm \ref{alg:recursive-rMLMC} with input depth $d$ requires $2^{N_d}$ calls of Algorithm \ref{alg:recursive-rMLMC} with input depth $d+1$. Since $N_d\sim \Geo(r_d)$ with $r_d > 0.5$, calling Algorithm \ref{alg:recursive-rMLMC} with input depth $d$ has an expected cost:
\begin{align*}
    \frac{r_d}{2r_d - 1} \times  \text{the expected cost of Algorithm \ref{alg:recursive-rMLMC} with input depth}~ d+1,
\end{align*}
where $\frac{r_d}{2r_d - 1} = \bE[2^{N_d}] < \infty$. By our inductive hypothesis, the second term in the above product is finite, therefore the expected cost of Algorithm \ref{alg:recursive-rMLMC} with input depth $d$ is also finite.

\underline{Finite $2^{d+1}$-th moment:}

Next we show $R_{d}$ has a finite $2^{d+1}$-th moment. For every fixed positive integer $n$, doing a Taylor expansions for $g_d$ at $(\yto{d}, \gamma_{d+1})$ with respect to the last component gives us:
\begin{align*}
g_{d}\left( 
\yto{d}, \frac{S_{2^{n}}}{2^{n}}
\right)
&=
g_d(\yto{d}, \gamma_{d+1})
+
\partial_{(d+1)M+1}g_d(\yto{d}, \gamma_{d+1})\left( \frac{S_{2^{n}}}{2^{n}} -  \gamma_{d+1} \right) \\
&\quad+
\frac12 \partial_{(d+1)M+1}^2g_d(\yto{d}, \xi(n))\left(\frac{S_{2^{n}}}{2^{n}} -  \gamma_{d+1} \right)^2 \\
g_{d}\left( 
\yto{d}, \frac{\So_{2^{n - 1}}}{2^{n - 1}}
\right)
&=
g_d(\yto{d}, \gamma_{d+1})
+
\partial_{(d+1)M+1}g_d(\yto{d}, \gamma_{d+1})\left( \frac{\So_{2^{n - 1}}}{2^{n - 1}} - \gamma_{d+1} \right) \\
&\quad+
\frac12 \partial_{(d+1)M+1}^2g_d(\yto{d}, \xio(n - 1))\left( \frac{\So_{2^{n - 1}}}{2^{n - 1}} - \gamma_{d+1} \right)^2 \\
g_{d}\left( 
\yto{d}, \frac{\Se_{2^{n - 1}}}{2^{n - 1}}
\right)
&=
g_d(\yto{d}, \gamma_{d+1})
+
\partial_{(d+1)M+1}g_d(\yto{d}, \gamma_{d+1})\left( \frac{\Se_{2^{n - 1}}}{2^{n - 1}} - \gamma_{d+1} \right) \\
&\quad+
\frac12 \partial_{(d+1)M+1}^2g_d(\yto{d}, \xie(n - 1))\left( \frac{\Se_{2^{n - 1}}}{2^{n - 1}} - \gamma_{d+1} \right)^2,
\end{align*}

with $\xi(n)$ between $\gamma_{d+1}$ and $S_{2^{n}} / 2^{n}$,  $\xio(n - 1)$ between $\gamma_{d+1}$ and $\So_{2^{n - 1}} / 2^{n - 1}$, $\xie(n - 1)$ between $\gamma_{d+1}$ and $\Se_{2^{n - 1}} / 2^{n - 1}$.

Thus, we have:
\begin{equation*}
\begin{split}
\Delta_{n}
&={}
g_{d}\left( 
\yto{d}, \frac{S_{2^{n}}}{2^{n}}
\right)
-
\frac12
\left[ 
g_{d}\left( 
\yto{d}, \frac{\So_{2^{{n} - 1}}}{2^{{n} - 1}}
\right)
+
g_{d}\left( 
\yto{d}, \frac{\Se_{2^{{n} - 1}}}{2^{{n} - 1}}
\right)
\right] \\
&= 
\partial_{(d+1)M+1}g_d(\yto{d}, \gamma_{d+1})\left( \frac{S_{2^{n}}}{2^{n}} - \gamma_{d+1} \right)
+
\frac12 \partial_{(d+1)M+1}^2g_d(\yto{d}, \xi(n))\left( \frac{S_{2^{n}}}{2^{n}} - \gamma_{d+1} \right)^2 \\
&\quad-
\frac12
\Bigg[ 
\partial_{(d+1)M+1}g_d(\yto{d}, \gamma_{d+1})\left( \frac{\So_{2^{n - 1}}}{2^{n - 1}} - \gamma_{d+1} \right) +
\frac12 \partial_{(d+1)M+1}^2g_d(\yto{d}, \xio(n - 1))\left( \frac{\So_{2^{n - 1}}}{2^{n - 1}} - \gamma_{d+1} \right)^2 \\
&\quad\quad\quad+
\partial_{(d+1)M+1}g_d(\yto{d}, \gamma_{d+1})\left( \frac{\Se_{2^{n - 1}}}{2^{n - 1}} - \gamma_{d+1} \right)
+
\frac12 \partial_{(d+1)M+1}^2g_d(\yto{d}, \xie(n - 1))\left( \frac{\Se_{2^{n - 1}}}{2^{n - 1}} - \gamma_{d+1} \right)^2
\Bigg] \\
&=
\frac12 \partial_{(d+1)M+1}^2g_d(\yto{d}, \xi(n))\left( \frac{S_{2^{n}}}{2^{n}} - \gamma_{d+1} \right)^2
-
\frac12
\Bigg[ 
\frac12 \partial_{(d+1)M+1}^2g_d(\yto{d}, \xio(n - 1))\left( \frac{\So_{2^{n - 1}}}{2^{n - 1}} - \gamma_{d+1} \right)^2 \\
&\quad\quad\quad\quad\quad\quad\quad\quad
\quad\quad\quad\quad\quad\quad\quad\quad
\quad\quad\quad\quad\quad\quad\quad\quad+
\frac12 \partial_{(d+1)M+1}^2g_d(\yto{d}, \xie(n - 1))\left( \frac{\Se_{2^{n - 1}}}{2^{n - 1}} - \gamma_{d+1} \right)^2
\Bigg].
\end{split}
\end{equation*}

By the LBS assumption which assumes $\lvert \partial_{(d+1)M+1}^2g_d(\yto{d}, z))\rvert < K_d$ for every $(\yto{d}, z)$, and our inductive hypothesis: $\gamma_{d+1} 
= \bE[R_{d+1}(\yto{d}) \mid \yto{d}]$ which allows us to use Lemma \ref{lem:conditional-MZ} with $Z_1 = \yto{d}$, $Z_2 = R_{d+1}(\yto{d})$, we have:
\begin{align*}
\norm{
\Delta_{n}
}_{\pi, 2^{d + 1}}
&\leq 
K_{d}
\norm{\left(  \frac{S_{2^n}}{2^n} - \gamma_{d+1}\right)^2}_{\pi, 2^{d + 1}}
+
\frac{K_{d}}2
\norm{\left( \frac{\So_{2^{n - 1}}}{2^{n - 1}} - \gamma_{d+1}  \right)^2}_{\pi, 2^{d + 1}}
+
\frac{K_{d}}2
\norm{\left( \frac{\Se_{2^{n - 1}}}{2^{n - 1}} - \gamma_{d+1} \right)^2}_{\pi, 2^{d + 1}} \\
&\leq
K_{d}B'_{2^{d+2}}\left(
\frac{\bE_{\pi}\left[ |R_{d+1}(\yto{d})|^{2^{d+2}} \right]}{2^{(2^{d+1}n)}}
\right)^{1/2^{d+1}} 
+ 
K_{d}B'_{2^{d+2}}\left(
\frac{\bE_{\pi}\left[ |R_{d+1}(\yto{d})|^{2^{d+2}} \right]}{2^{(2^{d+1}(n-1))}}
\right)^{1/2^{d+1}} \\
&=
K_{d}B'_{2^{d+2}}\left(
\frac{\bE_{\pi}\left[ |R_{d+1}(\yto{d})|^{2^{d+2}} \right]}{2^{(2^{d+1}n)}}
\right)^{1/2^{d+1}}
+ 
2K_{d}B'_{2^{d+2}}\left(
\frac{\bE_{\pi}\left[ |R_{d+1}(\yto{d})|^{2^{d+2}} \right]}{2^{(2^{d+1}n)}}
\right)^{1/2^{d+1}} \\
& = 
3K_{d}B'_{2^{d+2}}
\left(
\frac{\bE_{\pi}\left[ |R_{d+1}(\yto{d})|^{2^{d+2}} \right]}{2^{(2^{d+1}n)}}
\right)^{1/2^{d+1}}.
\end{align*}

Therefore, in total:
\[
\bE_{\pi}\left[ |\Delta_n|^{2^{d + 1}} \right]
\leq
D_d
\frac{\bE_{\pi}\left[ |R_{d+1}(\yto{d})|^{2^{d+2}} \right]}{2^{(2^{d+1}n)}},
\]
with $D_d = (3K_{d}B'_{2^{d+2}})^{(2^{d+1})} $.

The result claimed in (c) is obtained as follows. We have for $(1 - r_d) = 2^{-k_d}$ for some $k_d \in \left( 1 , \frac{2^{d+1}}{2^{d+1} - 1} \right)$:
\begin{align*}
\bE_{\pi}
\left[ 
|R_d(\yto{d-1})|^{2^{d+1}}
\right]
&=
\sum_{n=0}^\infty
\frac{\bE_{\pi}\left[ |\Delta_n|^{2^{d+1}} \right]}{p_{r_d}(n)^{{2^{d+1}} - 1}} \\
&\leq
\frac{D_d \bE_{\pi}\left[ |R_{d+1}(\yto{d})|^{2^{d+2}} \right]}{r_d^{2^{d+1} - 1}}
\sum_{n=0}^\infty 
\frac{1}{2^{2^{d+1} n}(1 - r_d)^{({2^{d+1} - 1}) n}} \\
&\leq
C_d
\left(
\prod_{i = d+1}^{D}
\tilde{C}_i
\right) \norm{g_D(\yto{D})}_{\pi, 2^{D+1}}^{2^{D+1}} 
\sum_{n=0}^\infty 
\left( \frac{1}{2^{{2^{d+1}} - k_d(2^{d+1} - 1)}} \right)^n \quad \textcolor{blue}{\text{inductive hypothesis}} \\
&=
C_d
\left(
\prod_{i = d+1}^{D}
\tilde{C}_i
\right) \norm{g_D(\yto{D})}_{\pi, 2^{D+1}}^{2^{D+1}} 
\left( 
\frac{2^{({2^{d+1}} - k_d(2^{d+1} - 1)})}{2^{({2^{d+1}} - k_d(2^{d+1} - 1))} - 1}
\right) \\
&=
\left( 
\prod_{i = d}^{D}
\tilde{C}_i
\right) 
\norm{g_D(\yto{D})}_{\pi, 2^{D+1}}^{2^{D+1}} .
\end{align*}

Here the choice $k_d \in \left( 1 , \frac{2^{d+1}}{2^{d+1} - 1} \right)$ is crucial. It ensures 
$2^{(2^{d+1})} (1-r_d)^{(2^{d+1} - 1)} > 1$, and in turn ensures the infinite summation of the above geometric series is finite.

\underline{Unbiasedness:}\\
Now we show the unbiasedness of $R_d(y^{(0 : d - 1)})$. Firstly, since we have just shown $R_d(\yto{d-1})$ has a finite $2^{d+1}$-th moment under $\pi$, it directly implies $\lvert R_d(\yto{d-1})\rvert$ has a finite first moment, which further implies $\bE[\lvert R_d(\yto{d-1})\rvert \mid \yto{d-1}]$ is finite for $\pi$-almost surely $\yto{d-1}$.

 We fix $\yto{d-1}$ from now on, and we will write $\bE_{\pi_{d:D}}[\cdot]$ as a shorthand notation for $\bE[ \cdot \mid \yto{d-1}]$
 . Recall that we have output $R_d = \Delta_{N_d} / p_{r_d}(N_d)$, with
\[
\Delta_{N_{d}}
=
g_{d}\left( 
\yto{d}, \frac{S_{2^{N_{d}}}}{2^{N_{d}}}
\right)
-
\frac12
\left[ 
g_{d}\left( 
\yto{d}, \frac{\So_{2^{{N_{d}} - 1}}}{2^{{N_{d}} - 1}}
\right)
+
g_{d}\left( 
\yto{d}, \frac{\Se_{2^{{N_{d}} - 1}}}{2^{{N_{d}} - 1}}
\right)
\right].
\]
Then, 
\begin{align*}
\bE_{\pi_{d : D}}
[R_d(y^{(0 : d - 1)})]
&=
\bE_{\pi_{d : D}}
\left[ 
\bE
\left[
\frac{\Delta_{N_d}}{p_{r_d}(N_d)}
\mid 
N_d
\right]
\right] \\
&=
\bE_{\pi_{d : D}} 
\left[ 
\sum_{n=0}^\infty 
\frac{\Delta_n}{p_{r_d}(n)} p_{r_d}(n)
\right] \\
&=
\bE_{\pi_{d : D}}
\left[ 
\sum_{n=0}^\infty 
\Delta_n
\right] \\
&\stackrel{(\star\star\star)}{=}
\sum_{n=0}^\infty
\bE_{\pi_{d : D}}[\Delta_n] \\
&=
\sum_{n=0}^\infty
\bE_{\pi_{d : D}}\left\{
g_{d}\left( 
\yto{d}, \frac{S_{2^{n}}}{2^{n}}
\right)
-
\frac12
\left[ 
g_{d}\left( 
\yto{d}, \frac{\So_{2^{{n} - 1}}}{2^{{n} - 1}}
\right)
+
g_{d}\left( 
\yto{d}, \frac{\Se_{2^{{n} - 1}}}{2^{{n} - 1}}
\right)
\right]
\right\} \\
&=
\sum_{n=1}^\infty
\left\{
\bE_{\pi_{d : D}}\left[ 
g_{d}\left(
\yto{d}, \frac{S_{2^{n}}}{2^{n}}
\right)
\right]
-
\bE_{\pi_{d : D}}\left[ 
g_{d}\left( 
\yto{d}, \frac{S_{2^{n-1}}}{2^{n-1}}
\right)
\right]
\right\}
+
\bE_{\pi_{d : D}}[g_{d}(\yto{d}, g_{d+1}(1))] \\
&=
\bE_{\pi_{d : D}}\left[ 
g_d\left( 
\yto{d}, \lim_{n \to \infty} \frac{S_{2^n}}{2^n}
\right)
\right] \\
&=
\bE_{\pi_{d : D}}\left[ 
g_d\left( 
\yto{d}, \gamma_{d+1}(\yto{d})
\right)
\right] \\
&=
\gamma_{d}(\yto{d-1}).
\end{align*}

All the above calculations are straightforward except for $(\star\star\star)$, which swaps the order of expectation and summation. Therefore we complete this proof of unbiasedness by justifying the swap in $(\star\star\star)$. To justify the swap, it suffices to show $\sum_n \bE[|\Delta_n|] < \infty$. Notice that $R_d(\yto{d})$ can be equivalently written as $\sum_{n=1}^\infty \Delta_n \bI(N_d = n)/p_{r_d}(n)$ where $N_d$ independent with $\{\Delta_i\}$. Calculating $\bE_{\pi_{d:D}}[\lvert R_d(\yto{d}) \rvert]$ yields:

\begin{align*}
    \bE_{\pi_{d:D}}[\lvert R_d(\yto{d-1}) \rvert] &= \bE_{\pi_{d:D}}\left[\left\lvert \sum_{n=1}^\infty \frac{\Delta_n \bI(N_d = n)} {p_{r_d}(n)}\right\rvert\right] \\
    &  = \bE_{\pi_{d:D}}\left[\sum_{n=1}^\infty \left\lvert  \frac{\Delta_n \bI(N_d = n)} {p_{r_d}(n)}\right\rvert\right] \quad\quad\quad\quad \textcolor{blue}{\text{only one term in the summation is non-zero}}\\ 
    & = \sum_{n=1}^\infty \bE_{\pi_{d:D}}\left[\left\lvert  \frac{\Delta_n \bI(N_d = n)} {p_{r_d}(n)}\right\rvert\right] \quad\quad\quad\quad \textcolor{blue}{\text{every term is non-negative}}\\
    & = \sum_{n=1}^\infty \bE_{\pi_{d:D}}\left[\left\lvert  \frac{\Delta_n} {p_{r_d}(n)}\right\rvert\right]\bE\left[\bI(N_d = n)\right] \quad \textcolor{blue}{\text{independence between $N$ and $\{\Delta_i\}$}}\\
    & = \sum_{n=1}^\infty \bE_{\pi_{d:D}}\left[\left\lvert \Delta_n\right\rvert \right].
\end{align*}
Since we already know $\bE_{\pi_{d:D}}[\lvert R_d(\yto{d-1})\rvert]<\infty$, this justifies our swap $(\star\star\star)$.

\end{proof}

\newpage
\section{Proof of Theorem \ref{thm:general D, Lipschitz case}}

Recall that  we say $\{g_d\}_{d=0}^{D-1}$ satisfy the last-component bounded Lipschitz condition (LBL) if for all $x, z \in \bR$:
\[
|g_d(\yto{d}, x) - g_d(\yto{d}, z)| < L_d|x - z|.
\]

The proof strategy of Theorem \ref{thm:general D, Lipschitz case} is very similar to Theorem \ref{thm:general D, second order case}. We start with a backward induction. 
\begin{proof}
\textbf{Case 1:~~}$d = D$\\

When $d = D$, Algorithm \ref{alg:recursive-rMLMC} samples one $y^{(D)} \sim \pi_{D}$ and outputs $R_D(\yto{D-1}) := g_D(\yto{D})$. Again, we first prove our output $R_D(\yto{D-1})$ has a finite expectation for almost  every fixed $\yto{D-1}$, then its expectation equals $\gamma_D(\yto{D-1})$ follows directly from the algorithm design. To show the first point, notice that the  expectation of $|R_D(\yto{D-1})|$  equals the conditional expectation $\bE[\lvert g_D(\yto{D})\rvert\mid \yto{D-1}]$. Since $\bE[|g_D|]<\infty$ by assumption, we have $$\bE[\lvert g_D(\yto{D})\rvert\mid \yto{D-1}] <\infty$$ almost surely. Therefore Algorithm \ref{alg:recursive-rMLMC} is unbiased when $d = D$ for almost  every input $\yto{D-1}$. Furthermore, it has computational cost $1$, and the output has finite $\left(2 - \frac{\delta}{2^{D}}\right)$-th moment.


\textbf{Case 2:~~}$0\leq d \leq D-1$

Now that our base case is proven, we proceed via backwards induction. Let $\delta_d := \delta/2^d$ for every $d \in \{0,1,\ldots, D\}$. Suppose unbiasedness, finite $(2 - \delta_{d+1})$-th moment, and finite expected computational cost are all satisfied for $d+1$ where $0 \leq d \leq D - 1$. Then  Algorithm \ref{alg:recursive-rMLMC} will call itself independently for $2^{N_{d}}$ times, each with input \{Depth index: $d+1$, Trajectory History: $H = \yto{d}$, Parameters: $r_{d+1}, \cdots, r_{D-1}$\}. This gives us i.i.d. samples $R_{d+1}(\yto{d})(1),...,R_{d+1}(\yto{d})(2^{N_{d}})$ which are used to compute the following:
\begin{align*}
S_{2^{N_{d}}}
&=
R_{d+1}(\yto{d})(1) +
R_{d+1}(\yto{d})(2) +
\cdots +
R_{d+1}(\yto{d})(2^{N_{d}}), \\
\So_{2^{N_{d} - 1}}
&=
R_{d+1}(\yto{d})(1) +
R_{d+1}(\yto{d})(3) +
\cdots +
R_{d+1}(\yto{d})(2^{N_{d}} - 1), \\
\Se_{2^{N_{d} - 1}}
&=
R_{d+1}(\yto{d})(2) +
R_{d+1}(\yto{d})(4) +
\cdots +
R_{d+1}(\yto{d})(2^{N_{d}}).
\end{align*}

Then Algorithm \ref{alg:recursive-rMLMC} returns as output $R_{d}(\yto{d}) = \Delta_{N_{d}} / p_{r_d}(N_{d})$, where $\Delta_{N_{d}}$ is defined in Algorithm \ref{alg:recursive-rMLMC}. By the inductive hypothesis on $d+1$, we have for almost  every $\yto{d}$:
\[
\bE
[
R_{d + 1}(\yto{d}) \mid \yto{d}
]
=
\gamma_{d + 1}(\yto{d})\]
and
\[
\bE_{\pi}\left[ |R_{d+1}( \yto{d})|^{2 - \delta_{d+1}} \right] 
< 
\left(
\prod_{i = d}^{D} \tilde{C}_i
\right)
\norm{g_D(\yto{D})}_{\pi, 2}^{2 - \delta_{d+1}}.
\]

We will start with showing  $R_d(y^{(0 : d - 1)})$ has a finite computational cost and a finite $(2 - \delta_d)$-th moment, and then show the unbiasedness. \\
\underline{Finite cost:}\\
To show the computational cost, recall that implementing Algorithm \ref{alg:recursive-rMLMC} with input depth $d$ requires $2^{N_d}$ calls of Algorithm \ref{alg:recursive-rMLMC} with input depth $d+1$. It suffices to check $r_d > 0.5$, which reduces to check the upper bound for $k_d$ (defined in Theorem \ref{thm:general D, Lipschitz case}) satisfies
\[\left( \frac{2^{d+2} - 3\delta}{2^{d+3} - 3\delta}\right) 
\left( \frac{2^{d+1} - \delta}{2^d - \delta}\right) > 1.
 \] 
Let $t \coloneqq 2^d$ and the above product becomes:
\[
\frac{4t - 3\delta}{8t - 3\delta} \frac{2t - \delta}{t-\delta} = \frac{8t^2 + 3\delta^2 - 10\delta}{8t^2 + 3\delta^2 - 11\delta} > 1.
\]

Since $N_d\sim \Geo(r_d)$ with $r_d > 0.5$, calling Algorithm \ref{alg:recursive-rMLMC} with input depth $d$ has an expected cost:
\begin{align*}
    \frac{r_d}{2r_d - 1} \times  \text{the expected cost of Algorithm \ref{alg:recursive-rMLMC} with input depth}~ d+1,
\end{align*}
where $\frac{r_d}{2r_d - 1} = \bE[2^{N_d}] < \infty$. By our inductive hypothesis, the second term in the above product is finite, therefore the expected cost of Algorithm \ref{alg:recursive-rMLMC} with input depth $d$ is also finite. 

\underline{Finite $(2 - \delta_d)$-th moment:}\\
Next we show $R_{d}$ has a finite $(2 - \delta_d)$-th moment. By the uniform $L_d$-Lipschitz property of $g_d$:
\begin{align*}
|\Delta_n|
&\leq 
\frac12 
\left| 
g_d\left( 
\yto{d}, \frac{S_{2^{n}}}{2^{n}}
\right) 
-
g_d\left( 
\yto{d}, \frac{\So_{2^{n - 1}}}{2^{n - 1}}
\right)
\right|
+
\frac12
\left| 
g_d\left( 
\yto{d}, \frac{S_{2^n}}{2^n}
\right) 
-
g_d\left( 
\yto{d}, \frac{\Se_{2^{n - 1}}}{2^{n - 1}}
\right)
\right|\\
&\leq 
\frac{L_{d}}{2}
\left| 
\frac{\So_{2^{n - 1}}}{2^{n - 1}}
-
\frac{\Se_{2^{n - 1}}}{2^{n - 1}}
\right|.
\end{align*}

Therefore, for any fixed $1 \leq p < 2$, applying the triangle inequality under the norm $\lVert \cdot \rVert_{\pi, p}$, and applying Lemma \ref{lem:conditional-MZ} with $Z_1 = \yto{d}$, $Z_2 = R_{d+1}(\yto{d})$, we have:

\begin{align*}
\norm{\Delta_n}_{\pi, p}
&\leq
\frac{L_{d}}{2}
\norm{
\frac{\So_{2^{n - 1}}}{2^{n - 1}}
-
\frac{\Se_{2^{n - 1}}}{2^{n - 1}}
}_{\pi, p} \\
&\leq
\frac{L_{d}}{2}
\norm{
\frac{\So_{2^{n - 1}}}{2^{n - 1}}
-
\gamma_{d+1}
}_{\pi, p}
+
\frac{L_{d}}{2}
\norm{
\gamma_{d+1}
-
\frac{\Se_{2^{n - 1}}}{2^{n - 1}}
}_{\pi, p} \\
&\leq
L_{d}
\left(
\frac{B_p \bE_{\pi}[|R_{d+1}(\yto{d})|^{p}]}{2^{(n-1)(p - 1)}}
\right)^{1/p},
\end{align*}
exponentiating both sides by  $p$ yields,
\[
\bE_{\pi}[|\Delta_n|^{p}]
\leq
\frac{L_d^p B_p\bE_{\pi}[|R_{d+1}(\yto{d})|^{p}]}{2^{(n-1)(p - 1)}}
\leq
\frac{C(d,p)\bE_{\pi}[|R_{d+1}(\yto{d})|^{p}]}{2^{(p - 1)n}},
\]
where $C(d,p) = L_d^p B_p 2^{1-p}$.

Recall that $\delta_d = \delta / 2^d$, let us choose $q_d = 2 - (\delta_d + \delta_{d+1})/2$. Since 
\[
\left( 1 , \left( \frac{q_d - 1}{q_d} \right) \left(\frac{2 - \delta_d}{1 - \delta_d}\right)\right)
= 
\left( 
1 ,
\left( 
\frac{2^{d+2} - 3\delta}{2^{d+3} - 3\delta}
\right) 
\left( 
\frac{2^{d+1} - \delta}{2^d - \delta}
\right)
\right),
\]
by definition of $k_d$ in the Theorem statement we have 
\[
k_d < \left( \frac{q_d - 1}{q_d} \right) \left(\frac{2 - \delta_d}{1 - \delta_d}\right).
\]

Now we estimate the $(2-\delta_d)$-th moment of $R_d$. An important trick in the calculation below is that we are not going to use the above estimate of $\bE_\pi[\lvert\Delta_n \rvert^p]$ directly on $p = 2-\delta_d$. Instead, we will first use H\"older's inequality, and then bound $\bE_\pi[|\Delta_n|^{q_d}]^{(2-\delta_d)/q_d}$ via the above estimate. It turns out the first way gives us an  order of $2^{-n(1-\delta_d)}$, while the latter is of order $2^{-n(q_d-1)(2-\delta_d)/q_d}$. Since the function $(x-1)(2-\delta_d)/x$ is increasing with $x$ when $x>1$, and equals $1-\delta_d$ when $x = 2-\delta_d$, we gain an extra factor $2^{-\Omega(1)n}$ by choosing $q_d > 2-\delta_d$ and use H\"older's inequality, which is important for establishing our main result. The detailed calculation is below:

\begin{align*}
\bE_{\pi}&[|R_d(\yto{d-1})|^{2 - \delta_d}] 
\leq
\sum_{n=0}^\infty
\frac{\bE_{\pi}[|\Delta_n|^{2 - \delta_d}]}{p_{r_d}(n)^{1 - \delta_d}} \\ 
&\leq
\sum_{n=0}^\infty
\frac{
\left(\bE_{\pi}\left[ |\Delta_n|^{2 - \delta_d \cdot \frac{q_d}{2-\delta_d}} \right]\right)^{\frac{2-\delta_d}{q_d}}
}{p_{r_d}(n)^{1 - \delta_d}} \quad\quad\quad\quad\quad\quad\quad\quad\quad\quad\quad\quad\quad\quad
\quad\quad\textcolor{blue}{\text{H\"older's inequality}}\\
&\leq
\frac{1}{r_d^{1 - \delta_d}}
\sum_{n=0}^\infty
\left(
\frac{C(d,q_d) \bE_{\pi}[|R_{d+1}(\yto{d})|^{q_d}]}{2^{(q_d - 1)n}}
\right)^{\frac{2 - \delta_d}{q_d}}
\frac{1}{(1 - r_d)^{(1 - \delta_d)n}} ~~~\quad~~~
\textcolor{blue}{\text{estimate of $\bE_\pi[|\Delta_n|^p]$ with $p = q_d$}} \\
&= 
C'(d) \norm{R_{d+1}(\yto{d})}^{2 - \delta_d}_{\pi, q_d}\sum_{n=0}^\infty 
\left(
\frac{1}{2^{\frac{(q_d - 1)}{q_d}(2 - \delta_d) - k_d(1 - \delta_d)}}
\right)^n \quad\qquad~~\quad~~~ \textcolor{blue}{\text{here}~C'(d) = \frac{C(d,q_d)^{(2-\delta_d)/q_d}}{r_d^{1 - \delta_d}}}
\\
&\leq
C'(d) \norm{R_{d+1}(\yto{d})}^{2 - \delta_{d}}_{\pi, 2 - \delta_{d+1}}
\sum_{n=0}^\infty 
\left(
\frac{1}{2^{\frac{(q_d - 1)}{q_d}(2 - \delta_d) - k_d(1 - \delta_d)}}
\right)^n  \quad\quad\quad~~~ \textcolor{blue}{\text{since}~q_d < 2-\delta_{d+1}}\\
&\leq
C'(d)
\left(
\prod_{i = d + 1}^{D} \tilde{C}_i
\right)
\norm{g_D(\yto{D})}_{\pi, 2}^{2 - \delta_d}
\left(
\frac{2^{\frac{(q_d - 1)}{q_d}(2 - \delta_d) - k_d(1 - \delta_d)}}{2^{\frac{(q_d - 1)}{q_d}(2 - \delta_d) - k_d(1 - \delta_d)} - 1}
\right) \quad \textcolor{blue}{\text{inductive hypothesis}} \\
&=
\left(
\prod_{i = d}^{D} \tilde{C}_i
\right)
\norm{g_D(\yto{D})}_{\pi, 2}^{2 - \delta_d},
\end{align*}
and note the RHS is still finite given the assumption of our theorem on $g_D$.
Again, as we can see in the proof, the choice of $k_d$ and $q_d$ is crucial for our calculation. It ensures $\frac{(q_d - 1)}{q_d}(2 - \delta_d) - k_d(1 - \delta_d) > 0$, and in turn ensures the above summation of the geometric series  converges.

\underline{Unbiasedness:}\\

The proof of unbiasedness of our estimator in this case is identical to the LBS case, however we still require a justification of the existence of a finite conditional expectation of $R_d(\yto{d-1})$. By what we have just proven, 
\[
\bE_{\pi}[|R_d(\yto{d-1})|]
\leq 
\left(
\bE_{\pi}
\left[
|R_d(\yto{d-1})|^{2 - \frac{\delta}{2^d}}
\right]
\right)^{1/(2 - \frac{\delta}{2^d})}
<
\infty.
\]
Given $\bE_{\pi}[|R_d(\yto{d-1})|] < \infty$, we immediately have $\bE_{\pi_{d : D}}
[R_d(y^{(0 : d - 1)})]$ exists for almost  every $\yto{d-1}$.
 
\end{proof}

\section{Construction of the NMC estimator}\label{sec: NMC}
The construction of the NMC estimator is described in \cite{rainforth2018nesting}. For concreteness, we explain the construction details here for the $D= 2$ case, which we use in Section \ref{sec:numerical}. 

Fix positive integers $N_0, N_1, N_2$, users first simulate $N_0$ i.i.d. $\{\y0_i\}_{i=1}^{N_0} \sim \pi_0$. For each fixed $\y0_i$, users sample $N_1$ i.i.d. $\{\y1_{i,j}\}_{j=1}^{N_1}$ from $\pi_1(\cdot \mid \y0_i)$. Then for each fixed trajectory $(\y0_i, \y1_{i,j})$, users sample $N_2$ i.i.d. $\{\y2_{i,j,k}\}_{k=1}^{N_2}$ from $\pi_2(\cdot \mid \y0_i, \y1_{i,j})$. After getting all these samples, we use the standard estimator:
\[
\hat{\gamma}_2(\y0_i,\y1_{i,j}) \coloneqq \frac{1}{N_2} \sum_{k=1}^{N_2}g_2\left(\y0_i,\y1_{i,j}, \y2_{i,j,k}\right)
\]
to estimate $\gamma_2(\y0_i,\y1_{i,j})$. Then using the plug-in estimator
\[
\hat{\gamma}_1(\y0_i) \coloneqq \frac{1}{N_1} \sum_{j=1}^{N_1}g_1\left(\y0_i,\y1_{i,j}, \hat{\gamma}_2(\y0_i,\y1_{i,j})\right)
\]
to estimate $\gamma_1(\y0_i)$. Finally, plugging in these $N_0$ estimators $\{\hat{\gamma}_1(\y0_i)\}_{i=1}^{N_0}$ to form the NMC estimator
\[
\hat{\gamma}_0 \coloneqq \frac{1}{N_0} \sum_{i=1}^{N_0}g_0\left(\y0_i,
\hat{\gamma}_1(\y0_i)\right)
\]
for $\gamma_0$.

It is proven in \cite{rainforth2018nesting} that when all $N_0, N_1, N_2$ go to $\infty$, $\hat\gamma_0$ converges to $\gamma_0$. It remains crucial to allocate $N_0,N_1,N_2$ to maximize the convergence rate with respect to the total sample size $n = N_0 N_1 N_2$. The choice $N_0 = N_1^2 = N_2^2$ is suggested in \cite{rainforth2018nesting}, which has a $\calO(N^{-1/4})$ convergence rate for the rMSE, or a $\calO(N^{-1/2})$  rate for the MSE.

\section{Additional statistics of Section \ref{sec:numerical}}\label{sec: additional statistics}
Figure \ref{fig:error_comparison} in Section \ref{sec:numerical} compares the errors between $\READ$, NMC1, and NMC2 in terms of the cost of the total sample size. Here we also compare their estimation errors in terms of the wall-clock time. For $\READ$, we call Algorithm \ref{alg:recursive-rMLMC} repeatedly for $10^5$ times. For NMC1, we choose $N_0 = N_1 = N_2 = 400$. For NMC2, we choose $N_0 = N_1^2 = N_2^2 = 10^4$. Their estimation errors and the corresponding wall-clock time costs are summarized in Table \ref{tab:clock-time cost}. It is  clear that $\READ$ is both faster (in wall-clock time) and more accurate. We also calculate the time-normalized squared error, defined as the product between the time cost and the squared error \cite{glynn1992asymptotic}. From the normalized squared error, $\READ$ is more than $130$ times more efficient than NMC2, and more than $407$ times more efficient than NMC1. 

\begin{table}[htbp]
\begin{center}
    
\begin{tabular}{|c|ccccc|}
\hline
Method & Setting                      & Total Sample Cost     & Time/Seconds & Squared Error         & \begin{tabular}[c]{@{}c@{}}Time-normalized \\ Squared Error\end{tabular} \\ \hline
$\READ$ & $10^5$ repetitions           & $4.625 \times 10^{5}$ & $\color{red}{11.86}$      & $\color{red}{3.186\times 10^{-6}}$ & $\color{red}{3.78 \times 10^{-5}}$                                                    \\
NMC1   & $N_0 = N_1 = N_2 = 400$      & $6.4 \times 10^7$     & $43.79$      & $3.51 \times 10^{-4}$ & $1.54 \times 10^{-2}$                                                    \\
NMC2   & $N_0 = N_1^2 = N_2^2 = 10^4$ & $10^8$                & $72.6$       & $6.8 \times 10^{-5}$  & $4.93 \times 10^{-3}$                                                    \\ \hline
\end{tabular}\caption{Cost comparison between different methods}\label{tab:clock-time cost}
\end{center}

\end{table}

\section{Extra experiments}\label{sec:extra experiment}

We consider an extra experiment with unknown ground truth. Let $\sigma(x) \coloneqq e^x/(1+e^x)$ be the sigmoid function. 
Suppose the process $(\y0,\y1,\y2)$ satisfies $\y0\sim \bN(0,1), \y1 \sim \bN(\y0,1), \y2\sim \bN(\y1,1)$. Define $g_0(\y0, z) \coloneqq \sigma(\y0+z), g_1(\yto1, z) \coloneqq \sigma(\y1 + z)$, and $g_2(\yto2) \coloneqq \sigma(\y2)$. The target quantity $\gamma_0$ defined \eqref{eqn:nested target} is again a nested expectation with  $D = 2$. Although we can not analytically calculate out $\gamma_0$,   we still implement our $\READ$ estimator with the NMC estimators in \cite{rainforth2018nesting} and compare their performance. The parameters of $\READ$ are the same as Section \ref{sec:numerical}, the allocation of $N_0,N_1,N_2$ of the NMC estimators also follows the same way as Section \ref{sec:numerical}.

The scatter plot of the estimation results is shown in Figure \ref{fig:sigmoid_estimation}. Although no ground truth is available, the trend for the estimation is clear. All three methods eventually get close to $0.612$, represented by the dotted black line in Figure \ref{fig:sigmoid_estimation}. It is also clear from the plot that $\READ$ always stays very close to the black line. In contrast, both NMC1 and NMC2 are significantly more fluctuated than $\READ$, where NMC1 appears to be the most unstable estimator. This again matches with the theoretical predictions in our paper and \cite{rainforth18nesting_prob} that $\READ$ converges the fastest while NMC1 converges the slowest.
\begin{figure}[htbp]
\begin{center}
\centerline{\includegraphics[width=\columnwidth]{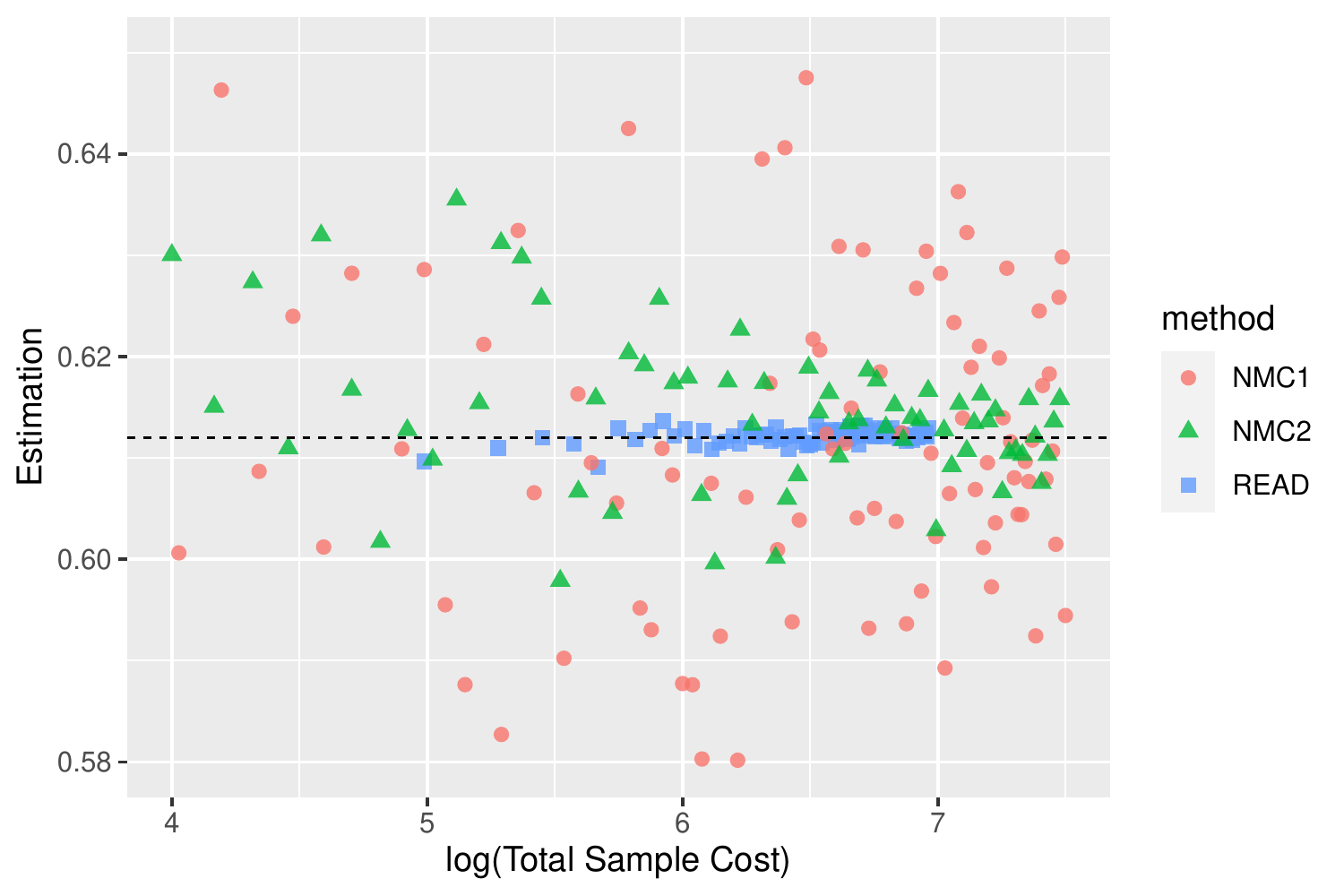}}
\caption{Scatterplot of the estimation of $\gamma_0$ as a function of $\log_{10}(\text{Total Sample Cost})$. Blue, red, green points correspond to $\READ$, NMC$1$, NMC$2$ estimators respectively.}
\label{fig:sigmoid_estimation}
\end{center}
\end{figure}
\begin{figure}
    \centering
    \subfigure[]{\includegraphics[width=0.45\textwidth]{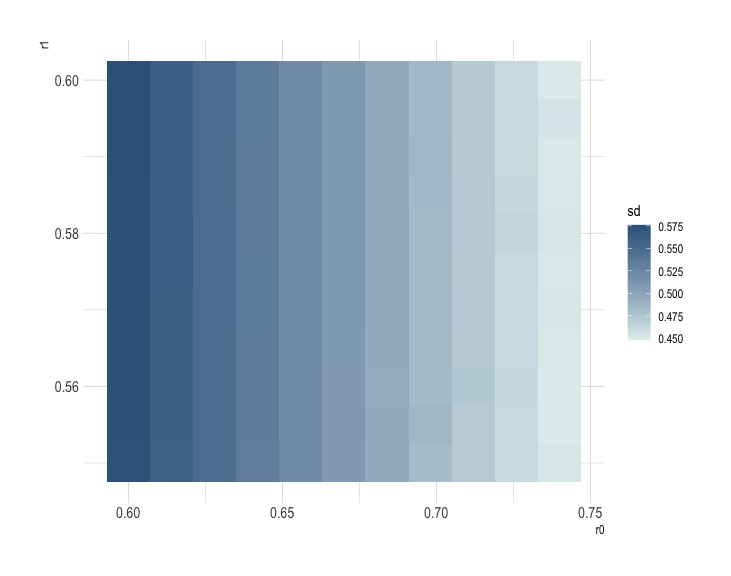}} 
    \subfigure[]{\includegraphics[width=0.45\textwidth]{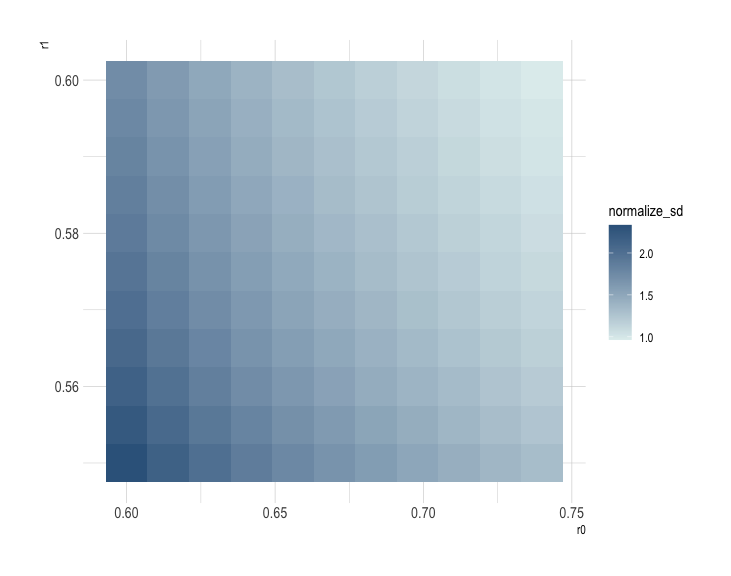}} 
    \caption{(a): Heatmap of the (empirical) standard deviation of $\READ$. (b): Heatmap of the work-normalized standard deviation of $\READ$. Here $r_0\in (0.6, 0.74)$, $r_1\in (0.55,0.6)$. Each standard deviation is estimated based on $10^6$ repetitions of Algorithm \ref{alg:recursive-rMLMC}.}
    \label{fig:heatmap_small_r}
\end{figure}

Next we let the parameters $(r_0,r_1)$ in Algorithm \ref{alg:recursive-rMLMC} vary and investigate the proper choice of the parameters in this experiment. Theorem \ref{thm:general D, second order case} shows any $(r_0,r_1) \in (0.5,0.75) \times (0.5, 1 - 2^{-4/3})$ guarantees $\READ$ has finite variance and finite cost. Therefore we choose $r_0$ on the lattice $\{0.6, 0.614,\ldots,  0.74\}$ and $r_1$ on the lattice $\{0.55, 0.555, \ldots, 0.6\}$. For each pair of $(r_0, r_1)$, we repeat Algorithm \ref{alg:recursive-rMLMC} for $10^6$ times, record the results and calculate their empirical standard deviation. The heatmap is shown in the left plot of Figure \ref{fig:heatmap_small_r}. The pattern suggests the standard deviation depends crucially on the choice of $r_0$, but less on $r_1$. The standard deviation decreases when $r_0$ increases. 

Since the expected sample cost of Algorithm \ref{alg:recursive-rMLMC}
equals $\left(r_1/(2r_1 - 1)\right) \left(r_2/(2r_2 - 1)\right)$. We also plot the `work-normalized standard deviation', which is defined as $\sqrt{\text{Expected Sample Cost}} \times \text{Standard deviation}$ in \cite{glynn1992asymptotic} to measure the efficiency of difference choices of $(r_0,r_1)$. The heatmap is shown in the right subplot of Figure \ref{fig:heatmap_small_r}. Our result suggests users should choose larger values of $(r_0,r_1)$  to maximize the efficiency, at least in this example.

Finally we test our results when $r_0, r_1$ are both beyond the range given by Theorem \ref{thm:general D, second order case}. Algorithm \ref{alg:recursive-rMLMC} can still be implemented, though there is no guarantees on the finite variance. Nevertheless, we choose $r_0 \in \{0.8,0.81,\ldots, 0.9\}$ and $r_1 \in \{0.7,0.71,\ldots, 0.8\}$ and report the heatmaps of the standard deviations/work-normalized standard deviations in Figure s\ref{fig:heatmap_large_r}. The estimates become significantly less stable, as some pairs of $(r_0,r_1)$ have much larger standard deviation than their neighborhoods. This suggests the actual standard deviation maybe already infinity (though the empirical standard deviation will always be finite), and therefore our result is less reliable. In conclusion, although larger values of $(r_0,r_1)$ can reduce the average cost of each implementation, users should not choose them too large as it may sacrifice the finite variance. Users can choose the parameters closer to the upper end of the ranges in Theorem \ref{thm:general D, second order case}, but not exceed these ranges.
\begin{figure}[htbp]
    \centering
    \subfigure[]{\includegraphics[width=0.45\textwidth]{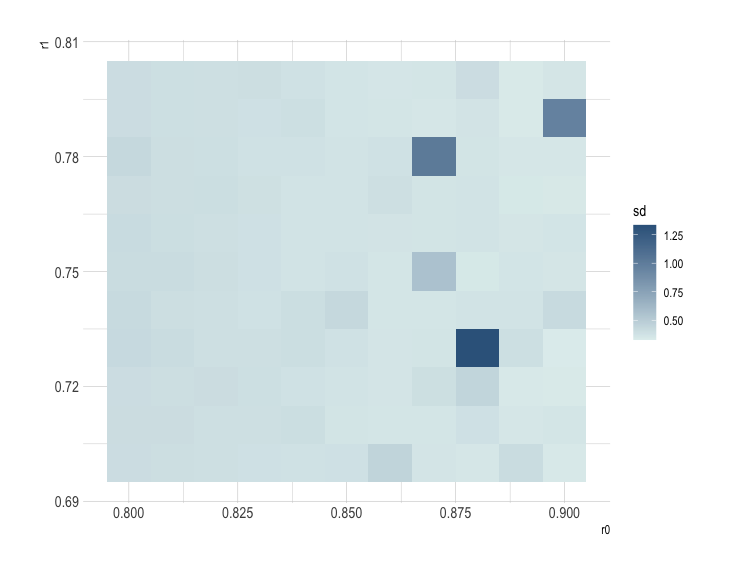}} 
    \subfigure[]{\includegraphics[width=0.45\textwidth]{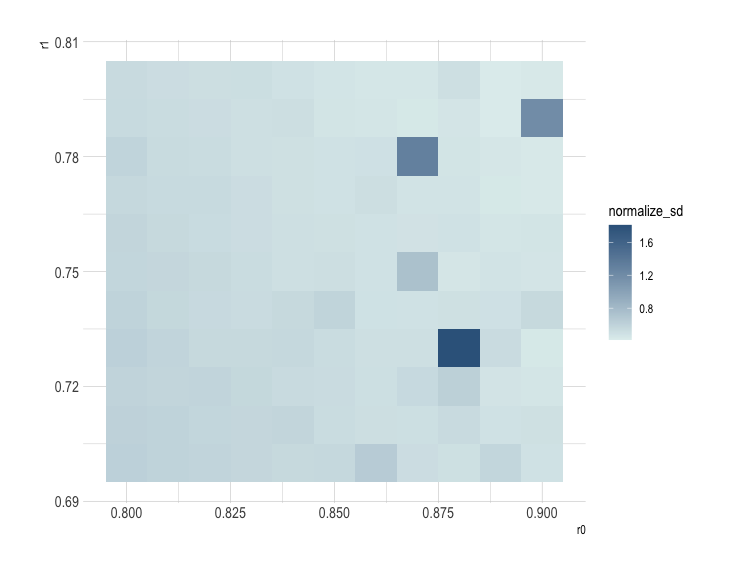}} 
    \caption{(a): Heatmap of the (empirical) standard deviation of $\READ$. (b): Heatmap of the work-normalized standard deviation of $\READ$. Here $r_0\in (0.8, 0.9)$, $r_1\in (0.7,0.8)$. Each standard deviation is estimated based on $10^6$ repetitions of Algorithm \ref{alg:recursive-rMLMC}.}
    \label{fig:heatmap_large_r}
\end{figure}

\end{document}